\newcommand{\cmark}{\ding{51}}%
\newcommand{\xmark}{\ding{55}}%
\def\BibTeX{{\rm B\kern-.05em{\sc i\kern-.025em b}\kern-.08em T\kern-.1667em\lower.7ex\hbox{E}\kern-.125emX}}
\theoremstyle{definition}
\newtheorem{theorem}{Theorem}
\newtheorem{definition}{Definition}
\newtheorem{lemma}{Lemma}
\newtheorem{prop}{Proposition}
\newtheorem{cor}{Corollary}
\newtheorem{remark}{Remark}
\newtheorem{eg}{Example}
\newcommand{\one}[1]{\mbox{1}\hspace{-0.25em}\mbox{l}_{\left\{#1\right\}}}
\newcommand{\argmin}{\operatornamewithlimits{argmin}}
\newcommand{\arginf}{\operatornamewithlimits{arginf}}
\newcommand{\argsup}{\operatornamewithlimits{argsup}}
\newcommand{\relmiddle}[1]{\mathrel{}\middle#1\mathrel{}} 
\newcommand\independent{\protect\mathpalette{\protect\independenT}{\perp}}
\def\independenT#1#2{\mathrel{\rlap{$#1#2$}\mkern2mu{#1#2}}}
\def\br{\mathbb R}
\def\vE{\mathbb E}
\def\vV{\mathbb V}
\font\b=cmr10 scaled\magstep4
\def\bigzerou{\smash{\lower1.7ex\hbox{\b 0}}}
\def\bigzerou{\smash{\lower1.7ex\hbox{\b 0}}}
\begin{document}
\title{A Generalization of the Stratonovich's Value of Information and Application to Privacy-Utility Trade-off 
\thanks{This research is supported in part by Grant-in-Aid JP17K06446 for Scientific Research (C). }
}

\author{
\IEEEauthorblockN{Akira Kamatsuka}
\IEEEauthorblockA{Shonan Institute of Technology \\ 
Email: \text{kamatsuka@info.shonan-it.ac.jp}
 }
\and 
\IEEEauthorblockN{Takahiro Yoshida}
\IEEEauthorblockA{Nihon University \\ 
Email: \text{yoshida.takahiro@nihon-u.ac.jp}
  }
\and 
\IEEEauthorblockN{Toshiyasu Matsushima}
\IEEEauthorblockA{Waseda University \\ 
Email: \text{toshimat@waseda.jp}
 }
}
\maketitle

\begin{abstract}
The Stratonovich's value of information (VoI) is quantity that measure how much inferential gain is 
obtained from a perturbed sample under information leakage constraint. 
In this paper, we introduce a generalized VoI for a general loss function and general information leakage.  Then we derive an upper bound of the generalized VoI. 
Moreover, for a classical loss function, we provide a achievable condition of the upper bound which is 
weaker than that of in previous studies. Since VoI can be viewed as a formulation of a privacy-utility trade-off (PUT) problem, we provide an interpretation of the achievable condition in the PUT context.
\end{abstract}

\section{Introduction}
Research on decision-making under a constraint of information leakage has been  
studied in 1960s in the academy of sciences of the Soviet Union (USSR Academy). 
In particular, Stratonovich's work \cite{Stratonovich1965} is  pioneering, 
however, it does not appear to be widely known\footnote{Recently, his book containing this research has been translated into English \cite{belavkin2020theory}. It is worth noting that similar approach have been studied by Kanaya and Nakagawa \cite{87006}. }. 
In \cite{Stratonovich1965} and \cite{belavkin2020theory}, he introduced \textit{Value of Information} (VoI) to quantify how much inferential gain is obtained from a perturbed sample $Y$ which contains some information about original sample $X$. His formulation of the VoI was based on the Shannon's mutual information (MI) $I(X; Y)$ in the information theory \cite{shannon} and a loss (cost) function $\ell(x, a)$, where $a$ is some action (e.g. point estimation on $X$, hypothesis testing on $p_{X}$, prediction), in the statistical decision theory (see, e.g., \cite{GVK027440176}). 

Since Shannon's proposal of MI, various information leakage measures have been proposed. 
Some examples are Arimoto's MI \cite{arimoto1977}, Sibson's MI \cite{Sibson1969InformationR}, 
and Csisz\'ar's MI \cite{370121}. 
Recently, new information leakage measures have been proposed in the privacy-utility trade-off (PUT) problem, such as $f$-\textit{information} \cite{8437735} and $f$-\textit{leakage} \cite{8804205}, as privacy measures. 
In addition to these measures, by assuming a ``guessing'' adversary, information leakage measures that have operational meanings have been proposed. 
For example, Asoodeh \textit{et al.} introduced \textit{probability of correctly guessing} in \cite{8006629,8438536}. 
In \cite{8006632,7460507, 8943950}, Issa \textit{et al.} introduced \textit{maximal leakage} 
which quantifies the maximal logarithmic gain of correctly guessing any arbitrary function of the original sample. Extending the maximal leakage, Liao \textit{et al.} introduced \textit{$\alpha$-leakage} and \textit{$\alpha$-maximal leakage} in \cite{8437307,8804205,9162168,8849769}.
Liao \textit{et al.} also showed the relationships between the (maximal) $\alpha$-leakage and both Arimoto's MI and Sibson's MI. It is worth noting that Liao \textit{et al.} introduced an  $\alpha$-\textit{loss} to define the $\alpha$-leakage. 

In this study, we first introduce an information leakage measure in a general manner by extracting common properties from these specific information leakage measures. 
Then we define a generalized VoI for the information leakage measure and a 
general loss function containing the $\alpha$-loss. For the generalized VoI, 
we derive an upper bound next. 
Moreover, for a classical loss function $\ell(x, a)$, we also provide an achievable condition of the upper bound which is weaker than that of in previous studies \cite{Stratonovich1965,belavkin2020theory} and \cite{raginsky_VoI}. We also show basic properties of the achievable upper bound and some extended results.
Finally, since VoI can be viewed as a formulation of a PUT problem in a certain situation, based on our prior work \cite{9611484}, we provide an interpretation of the achievable condition in the PUT context.

\section{Preliminary}\label{sec:preliminary}
\begin{figure}[htbp]
\centering
\includegraphics[width=3.5in, clip]{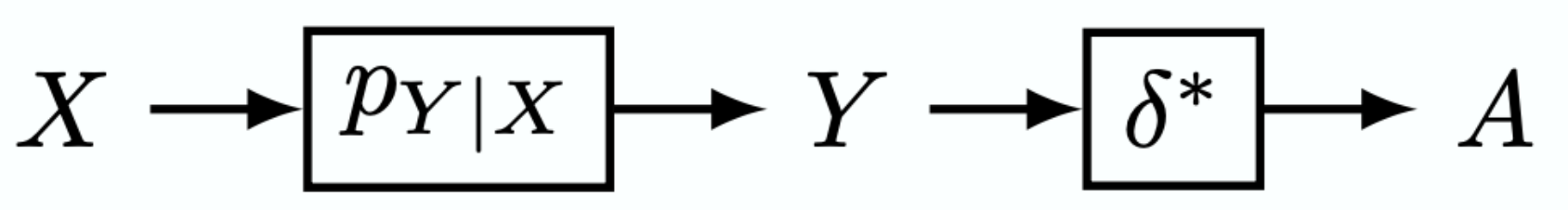}
%
%
\caption{System Model}
\label{fig:system_model}
\end{figure}

In this section, we first review the 
\text{statistical decision theory} 
and the concept of \textit{information leakage} in information theory on the system model in Figure \ref{fig:system_model}. 
For simplicity, unless otherwise stated, we will assume that all alphabets are finite.

\subsection{Notations} \label{ssec:notations}
Let $X, Y$ and $A$ be random variables on alphabets 
$\mathcal{X}, \mathcal{Y}$ and $\mathcal{A}$.
Let $p_{X, Y} = p_{X}\times p_{Y\mid X}$ and $p_{Y}$ be a given joint distribution of $(X, Y)$ 
and a marginal distribution of $Y$, respectively. 
Let $\delta^{*}\colon \mathcal{Y}\times \mathcal{A} \to [0, 1]$ 
and $\delta\colon \mathcal{Y}\to \mathcal{A}$ 
be a randomized decision rule and a deterministic decision rule, respectively. 
Since $\delta^{*}(y, a)$ is equivalent to a conditional probability $p_{A\mid Y}(a\mid y)$, 
we will use these notations interchangeably.
The classical notation for a loss function in the statistical decision theory is $\ell(x, a)$, which represents 
\textit{a loss for making an action $A=a$ when the true state is $X=x$.}
In this study, however, we extend the concept of the loss function to 
\textit{a loss for making an action $A=a$ from a sample $Y=y$ 
using the (randomized) decision rule  $\delta^{*}$ 
when the true state is $X=x$}, denoted as $\ell(x, y, a, \delta^{*})$.
Finally, we use $\log$ to represent the natural logarithm.

\subsection{Statistical decision theory}\label{ssec:SDT}
We review the basic concepts and results in the statistical decision theory next.

\begin{definition}
The loss function for a randomized decision rule $\delta^{*}\colon \mathcal{Y}\times \mathcal{A}\to [0, 1]$ is defined as 
\begin{align}
L(x, \delta^{*}(y, \cdot)) &:= \vE_{A}\left[\ell(x, y, A, \delta^{*}) \relmiddle{|} Y=y\right] \\ 
&= \sum_{a} p_{A\mid Y}(a\mid y) \ell(x, y, a, \delta^{*}).
\end{align}
\end{definition}

\begin{definition}
The risk function and the Bayes risk function for a randomized decision rule $\delta^{*}$ 
is defined as 
\begin{align}
R(x, \delta^{*}, p_{Y\mid X}) &:= \vE_{Y}[L(x, \delta^{*}(Y, \cdot))\mid X=x] \\ 
&= \sum_{y}p_{Y\mid X}(y\mid x)L(x, \delta^{*}(y, \cdot)), \\ 
r(\delta^{*}, p_{Y\mid X}) &:= \vE_{X}\left[R(X, \delta^{*}, p_{Y\mid X})\right] \\ 
&= \sum_{x}p_{X}(x)R(x, \delta^{*}, p_{Y\mid X}).
\end{align}
\end{definition}

\begin{prop}[\text{\cite[Prop 1]{9611484}}] \label{prop:opt_Bayes_randomized}
The minimal Bayes risk is given by
\begin{align}
\inf_{\delta^{*}}r(\delta^{*}, p_{Y\mid X}) &= r(\delta^{*,\text{Bayes}}, p_{Y\mid X}) \\
&= \vE_{Y}\left[\inf_{\delta(y, \cdot)} \vE_{X}\left[L(X, \delta^{*}(Y, \cdot))\relmiddle{|} Y\right]\right],
\end{align}
with the optimal randomized decision rule $\delta^{*, \text{Bayes}}$ given by 
\begin{align}
\delta^{*,\text{Bayes}}(y, \cdot) &:= \arginf_{\delta^{*}(y, \cdot)} \vE_{X}\left[L(X, \delta^{*}(y, \cdot)) \relmiddle{|} Y = y\right], \label{eq:bayes_definition}
\end{align}
where infimum is over all randomized decision rule $\delta^{*}(y, \cdot)=p_{A\mid Y}(\cdot\mid y)$ for fixed $y$.
In particular, when a channel is $p_{Y\mid X} = p_{Y}$ (i.e., $X$ and $Y$ are independent, denoted by $X\independent Y$), 
\begin{align}
\inf_{\delta^{*}}r(\delta^{*}, p_{Y}) &= r(\delta^{*,\text{Bayes}}, p_{Y}) \\ 
&= \vE_{Y}\left[\inf_{\delta^{*}(y, \cdot)}\vE_{X}\left[L(X, \delta^{*}(Y,\cdot))\right]\right].
\end{align}
\end{prop}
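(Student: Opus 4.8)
The plan is to express the Bayes risk $r(\delta^{*}, p_{Y\mid X})$ as an iterated expectation taken over $Y$ first and then over $X$ given $Y$, and then to exploit the fact that the minimization over $\delta^{*}$ decouples across the values $y\in\mathcal{Y}$. Starting from the definitions of $r$ and $R$, the Bayes risk is the finite double sum
\begin{align*}
r(\delta^{*}, p_{Y\mid X}) = \sum_{x}\sum_{y} p_{X}(x)\, p_{Y\mid X}(y\mid x)\, L(x, \delta^{*}(y, \cdot)).
\end{align*}
Applying Bayes' rule $p_{X}(x)\, p_{Y\mid X}(y\mid x) = p_{Y}(y)\, p_{X\mid Y}(x\mid y)$ and exchanging the order of the (finite) sums rewrites this as $\vE_{Y}\left[\vE_{X}\left[L(X, \delta^{*}(Y, \cdot))\mid Y\right]\right]$, which is the ``$Y$-outer'' form we want to minimize.

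The key step is the interchange of $\inf_{\delta^{*}}$ with the outer expectation over $Y$. For each fixed $y$, the inner quantity $\vE_{X}\left[L(X, \delta^{*}(y,\cdot))\mid Y=y\right]$ depends on the decision rule only through its slice $\delta^{*}(y,\cdot)=p_{A\mid Y}(\cdot\mid y)$, and the set of admissible rules is the Cartesian product $\prod_{y\in\mathcal{Y}}\Delta(\mathcal{A})$ of probability simplices over $\mathcal{A}$, with no constraint coupling the slices for distinct $y$. Minimizing a finite sum of nonnegative terms $p_{Y}(y)\,\vE_{X}[\cdots\mid Y=y]$, each governed by its own free slice over a common feasible set, therefore equals the sum of the per-slice minima:
\begin{align*}
\inf_{\delta^{*}} \sum_{y} p_{Y}(y)\, \vE_{X}\left[L(X, \delta^{*}(y,\cdot))\mid Y=y\right] = \sum_{y} p_{Y}(y)\, \inf_{\delta^{*}(y,\cdot)} \vE_{X}\left[L(X, \delta^{*}(y,\cdot))\mid Y=y\right],
\end{align*}
which is exactly $\vE_{Y}\left[\inf_{\delta(y,\cdot)}\vE_{X}\left[L(X,\delta^{*}(Y,\cdot))\mid Y\right]\right]$. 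Defining $\delta^{*,\text{Bayes}}(y,\cdot)$ as the per-slice $\arginf$ and assembling these slices produces a single randomized rule that attains the overall infimum, giving $\inf_{\delta^{*}} r(\delta^{*},p_{Y\mid X}) = r(\delta^{*,\text{Bayes}},p_{Y\mid X})$. The case $p_{Y\mid X}=p_{Y}$ is immediate: then $p_{X\mid Y}=p_{X}$, so the inner conditional expectation collapses to the unconditional $\vE_{X}[\,\cdot\,]$, yielding the last display.

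The step requiring the most care is the $\inf$/sum interchange under the paper's generalized loss $\ell(x,y,a,\delta^{*})$: the decoupling argument is clean precisely when $L(x,\delta^{*}(y,\cdot))$ depends on $\delta^{*}$ only through the slice $\delta^{*}(y,\cdot)$, so that the feasible region genuinely factorizes over $y$; I would either restrict attention to losses with this ``local-in-$y$'' dependence or verify it for the specific $\ell$'s used later (the $\alpha$-loss and the classical $\ell(x,a)$, for which the dependence is manifestly local). A secondary point is the attainment of the $\arginf$ claimed in the statement: in the finite-alphabet setting each simplex $\Delta(\mathcal{A})$ is compact, so continuity of $L(x,\cdot)$ in $p_{A\mid Y}(\cdot\mid y)$ suffices; if one only wants the value identity, the proof goes through verbatim with ``$\inf$'' in place of ``$\arginf$'' and needs no continuity hypothesis.
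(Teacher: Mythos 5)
Your argument is correct and is essentially the standard pointwise-optimization proof: rewrite the Bayes risk as $\vE_{Y}\left[\vE_{X}\left[L(X,\delta^{*}(Y,\cdot))\mid Y\right]\right]$ and observe that the infimum decouples over the slices $\delta^{*}(y,\cdot)$, which is exactly the route the paper relies on (it simply cites Prop.\ 1 of \cite{9611484} rather than reproving it). Your two cautionary remarks — that the decoupling needs $L(x,\delta^{*}(y,\cdot))$ to depend on $\delta^{*}$ only through its $y$-slice (automatic for the standard losses used later), and that the $\arginf$ is attained on the compact simplex when $L$ is continuous in the slice — are apt and do not affect the validity of the value identity.
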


\begin{remark}
The corresponding result of the Proposition \ref{prop:opt_Bayes_randomized} for 
a deterministic decision rule $\delta$ and a classical loss function $\ell(x, a)$ is given by
\begin{align}
\delta^{\text{Bayes}}(y) &:= \arginf_{a} \vE_{X}\left[\ell(X, a)\mid Y=y\right], \\ 
\inf_{\delta} r(\delta, p_{Y\mid X}) &= r(\delta^{\text{Bayes}}, p_{Y\mid X}) 
= \vE_{Y}\left[\inf_{a}\vE_{X}\left[\ell(X, a)\right]\relmiddle{|} Y\right], \\ 
\inf_{\delta} r(\delta, p_{Y}) &= r(\delta^{\text{Bayes}}, p_{Y}) 
= \inf_{a}\vE_{X}\left[\ell(X, a)\right].
\end{align}
\end{remark}

\begin{table*}[h]
   \centering
   \resizebox{1.\textwidth}{!}{
  \begin{tabular}{@{} |c||c||c|c|c| @{}}
    \hline
    \multicolumn{1}{|c||}{Name} 
    & \multicolumn{1}{|c||}{Definition} 
    & \multicolumn{1}{c|}{$1)$}
    & \multicolumn{1}{c|}{$2)$}
    & \multicolumn{1}{c|}{$3)$}
    \\ \hline \hline 
    mutual information (MI) \cite{shannon}
    & $I(X; Y) := H(X) - H(X\mid Y)$  
    & \cmark 
    & \begin{tabular}{c}
    \cmark \\ 
    \cite[Thm 2.8.1]{Cover:2006:EIT:1146355}
    \end{tabular}
    & \begin{tabular}{c}
    \cmark \\ 
    \cite[Eq (2.90)]{Cover:2006:EIT:1146355}
    \end{tabular}
    \\ \hline 
    Arimoto's MI of order $\alpha$ \cite{arimoto1977}
    & $I_{\alpha}^{\text{A}}(X; Y) := H_{\alpha}(X) - H_{\alpha}^{\text{A}}(X\mid Y)$ 
    & \begin{tabular}{c}
    \cmark \\ 
    \cite[Thm 2]{arimoto1977}
    \end{tabular}
    & \begin{tabular}{c}
    \cmark \\
    \cite[Footnote 4]{8804205}, \\ 
    \cite[Cor 1]{6898022}
    \end{tabular}
    & \begin{tabular}{c}
    \cmark \\ 
    \cite[Thm 2]{arimoto1977}
    \end{tabular}
    \\ \hline 
    Arimoto's MI of order $\infty$ \cite{8804205}
    & $I_{\infty}^{\text{A}}(X; Y) :=\log \frac{\sum_{y}\max_{x}p_{X,Y}(x,y)}{\max_{x}p_{X}(x)}$ 
    & \begin{tabular}{c}
    \cmark 
    \end{tabular}
    & \begin{tabular}{c}
    \cmark \\
    \cite[Cor 1]{6898022}
    \end{tabular}
    & \begin{tabular}{c}
    \xmark \\ 
    \cite[Sec 6.6]{asoodeh_phdthesis}
    \end{tabular}
    \\ \hline 
    Sibson's MI of order $\alpha$ \cite{Sibson1969InformationR}
    & $I_{\alpha}^{\text{S}}(X; Y) := \min_{q_{Y}} D_{\alpha}(p_{X, Y} || p_{X} q_{Y})$
    & \cmark
    & \begin{tabular}{c}
    \cmark \\ 
    \cite[Eq (55)]{5707067}
    \end{tabular} 
    & \cmark
    \\ \hline 
    Sibson's MI of order $\infty$ \cite{8804205}
    & $I_{\infty}^{\text{S}}(X; Y) :=\log \sum_{y}\max_{x}p_{Y\mid X}(y\mid x)$ 
    & \begin{tabular}{c}
    \cmark 
    \end{tabular}
    & \begin{tabular}{c}
    \cmark
    \end{tabular}
    & \begin{tabular}{c}
    \cmark
    \end{tabular}
    \\ \hline 
    Csisz\'ar's MI of order $\alpha$ \cite{370121} 
    & $I_{\alpha}^{\text{C}}(X; Y) := \min_{q_{Y}} \vE_{X}\left[D_{\alpha}(p_{Y\mid X}(\cdot\mid X) || q_{Y})\right]$
    & \cmark 
    & \begin{tabular}{c}
    \cmark \\ 
    \cite[Eq (22)]{370121}
    \end{tabular}
    & \cmark 
    \\ \hline 
    $f$-information \cite{8437735}
    & $I_{f}(X; Y) := D_{f}(p_{X, Y} || p_{X} p_{Y})$
    & \cmark 
    & \begin{tabular}{c}
    \cmark \\ 
    \cite[Thm 7.9]{polyanskiy_lecnotes_fdiv}
    \end{tabular}
    & \begin{tabular}{c}
    \cmark \\ 
    \cite[Lem 4]{Csiszar:2004:ITS:1166379.1166380}, \\ 
    \cite[Thm 7.3]{polyanskiy_lecnotes_fdiv}
    \end{tabular}
    \\ \hline 
    $f$-leakage \cite{8804205} 
    & $\mathcal{L}_{f}(X\to Y) := 
    \min_{q_{Y}} D_{f}(p_{X, Y} || p_{X} q_{Y})$ 
    & \cmark 
    & \cmark 
    & \cmark
    \\ \hline \hline 
    maximal leakage \cite{8943950} 
    & $\mathcal{L}_{\text{maxL}}(X\to Y) := \sup_{U-X-Y} \log \frac{\max_{p_{\hat{U}\mid Y}}\vE_{U, Y}\left[p_{\hat{U}\mid Y}(U \mid Y) \right]}{\max_{u}p_{U}(u)}$ 
    & \begin{tabular}{c}
    \cmark \\
    \cite[Lem 1]{8943950} 
    \end{tabular}
    & \begin{tabular}{c}
    \cmark \\ 
    \cite[Lem 1]{8943950}
    \end{tabular}
    & \begin{tabular}{c}
    \cmark \\ 
    \cite[Cor 2]{8943950} 
    \end{tabular}
    \\ \hline 
    $\alpha$-leakage \cite{8804205} 
    & $\mathcal{L}_{\alpha}(X\to Y) := 
    \frac{\alpha}{\alpha-1} \log \frac{\max_{p_{\hat{X}\mid Y}} \vE_{X,Y}\left[p_{\hat{X}\mid Y}(X\mid Y)^{\frac{\alpha-1}{\alpha}}\right]}{\max_{p_{\hat{X}}}\vE_{X}\left[p_{\hat{X}}(X)^{\frac{\alpha-1}{\alpha}}\right]}$ 
    & \cmark 
    & \cmark 
    & \cmark 
    \\ \hline 
    maximal $\alpha$-leakage \cite{8804205} 
    & $\mathcal{L}_{\alpha}^{\text{max}}(X\to Y) := \sup_{U-X-Y} \mathcal{L}_{\alpha}(U\to Y)$ 
    & \cmark 
    & \begin{tabular}{c}
    \cmark \\ 
    \cite[Thm 3]{8804205}
    \end{tabular}
    & \cmark 
    \\ \hline 
    \textsf{mmse}-leakage  \text{[This study]}
    & $\mathcal{L}_{\textsf{mmse}}(X\to Y) := \vV(X) - \vE_{Y}\left[\vV(X\mid Y)\right]$ 
    & \begin{tabular}{c}
    \cmark \\ 
    \text{[Prop \ref{prop:property_mmse_leakage}]}
    \end{tabular}
    & \begin{tabular}{c}
    \cmark \\ 
    \text{[Prop \ref{prop:property_mmse_leakage}]}
    \end{tabular}
    & \begin{tabular}{c}
    \xmark \\ 
    \text{[Prop \ref{prop:property_mmse_leakage}]}
    \end{tabular}
    \\ \hline 
  \end{tabular}
  }
  \caption{Typical Information leakage measures in Information Theory}
  \label{tab:information_leakage}
\end{table*}

\subsection{Information leakage} \label{ssec:info_leakage}

In this study, we introduce information leakage measure, denoted as $\mathcal{L}(X\to Y)$,  
to quantify how much information $Y$ leak about $X$.
To this end, we extract some properties in common to well-known information leakage measures in information theory\footnote{Note that these properties are part of requirements for reasonable information leakage measures proposed by Issa \textit{et al.}\cite{8943950}.}.

\begin{definition} \label{def:info_leakage}
The information leakage $\mathcal{L}(X\to Y) = \mathcal{L}(p_{X}, p_{Y\mid X})$ is defined as a 
functional of $p_{X}$ and $p_{Y\mid X}$ that satisfies following properties:
\begin{enumerate}
\item \textit{Non-negativity}: 
\begin{align}
\mathcal{L}(X\to Y) \geq 0.
\end{align}
\item \textit{Data Processing Inequality (DPI)}: 

If $X-Y-Z$ forms a Markov chain, then 
\begin{align}
\mathcal{L}(X\to Z) \leq \mathcal{L}(X\to Y). \label{eq:DPI}
\end{align}
\item \textit{Independence}:
\begin{align}
\mathcal{L}(X\to Y) = 0 \Longleftrightarrow X \independent Y.  
\end{align}
\end{enumerate}
\end{definition}

\subsubsection{Examples of the information leakage}
Table \ref{tab:information_leakage} shows the typical information leakage measures 
in information theory that have these properties and their references, where 
\begin{itemize}
\item $\alpha \in (0, 1) \cup (1, \infty)$. Note that the value of the information leakage measures in the table are extended by continuity to $\alpha=1$ and $\alpha=\infty$.
\item $H_{\alpha}(X) := \frac{\alpha}{1-\alpha}\log \left(\sum_{x} p_{X}(x)^{\alpha} \right)^{\frac{1}{\alpha}}$ is the R\'enyi entropy of order $\alpha$.
\item $H_{\alpha}^{\text{A}}(X | Y) :=  \frac{\alpha}{1-\alpha} \log \sum_{y} \left( \sum_{x} p_{X, Y}(x, y)^{\alpha} \right)^{\frac{1}{\alpha}}$   
is Arimoto's conditional entropy of $X$ given $Y$ of order $\alpha$. 
\item $D_{\alpha}(p || q) := \frac{1}{\alpha-1} \log \left( \sum_{z} p^{\alpha}(z)q^{1-\alpha}(z) \right)$ is the R\'enyi divergence of order $\alpha$. 
\item $U$ represents an arbitrary (potentially random) function of $X$ and $\hat{U}$ represents its estimator.
\item $D_{f}(p || q) := \sum_{z\in \mathcal{\mathcal{Z}}} q(z)f \left( \frac{p(z)}{q(z)} \right)$ is the $f$-divergence, where 
$f\colon \br_{+}\to \br$ is a convex function such that $f(1) = 0$ and strictly convex at $t=1$, where $\br_{+}:=[0, \infty)$. 
\end{itemize}
Note that relationships between these information leakage measures are given as follows:
\begin{itemize}
\item $I(X; Y) = I_{1}^{\text{A}}(X; Y) = I_{1}^{\text{S}}(X; Y) = I_{1}^{\text{C}}(X; Y) 
= I_{f}(X; Y)$, where $f(t) = t\log t$. 
\item $I_{\alpha}^{\text{A}}(X; Y) = \mathcal{L}_{\alpha}(X\to Y)$ (see \cite[Thm 1]{8804205}).
\item 
$\mathcal{L}^{\text{max}}_{\alpha}(X\to Y) \notag \\ 
= 
\begin{cases}
\sup_{p_{\tilde{X}}}I_{\alpha}^{\text{A}}(\tilde{X}; Y) = \sup_{p_{\tilde{X}}}I_{\alpha}^{\text{S}}(\tilde{X}; Y), & \alpha>1, \\ 
\mathcal{L}_{\text{MaxL}}(X\to Y), & \alpha = \infty,  \\
I(X; Y), & \alpha = 1, 
\end{cases} 
$

where $p_{\tilde{X}}$ is a probability distribution over support of $p_{X}$. 
See \cite[Thm 2]{8804205} for detail.
\end{itemize}

Most of the non-negativity properties $1)$ in the Table \ref{tab:information_leakage} follow from the non-negativity of $D_{\alpha}(p || q)$ and $D_{f}(p || q)$. 
Note that properties of $\alpha$-leakage $\mathcal{L}_{\alpha}(X\to Y)$ follows from that of Arimoto's MI $I_{\alpha}^{\text{A}}(X; Y)$ because of their identity mentioned above.
Independence property $3)$ of maximal $\alpha$-leakage $\mathcal{L}_{\alpha}^{\text{max}}(X\to Y)$ follows from the property in the $\alpha$-leakage, 
while the property of Sibson's MI follows can be derived in a similar manner of \cite[Thm 2]{arimoto1977}.
Csisz\'ar's MI $I_{\alpha}^{\text{C}}(X; Y)$ and $f$-leakage $\mathcal{L}_{f}(X\to Y)$ also have the independence property $3)$.
In fact, for Csisz\'ar's MI, it follows from the non-negativity of the $\alpha$-divergence that 
$I_{\alpha}^{\text{C}}(X; Y) =\vE_{X}\left[D_{\alpha}(p_{Y\mid X}(\cdot\mid X) || q_{Y}^{\text{C, *}})\right] = 0  
\Longleftrightarrow D_{\alpha}(p_{Y\mid X}(\cdot | X) || q_{Y}^{\text{C}, *}) = 0 \textit{ a.s.}  
\Longleftrightarrow p_{Y\mid X}(y | x) = q_{Y}^{\text{C}, *}(y) = p_{Y}(y), \forall x\in \text{supp}(p_{X}), y\in \mathcal{Y}
\Longleftrightarrow X\independent Y$, 
where $q_{Y}^{\text{C}, *} := \argmin_{q_{Y}} \vE_{X}\left[D_{\alpha}(p_{Y\mid X}(\cdot\mid X) || q_{Y})\right]$, 
$\text{supp}(p_{X}):= \left\{\, x\in \mathcal{X} \relmiddle{|} p_{X}(x)>0 \right\}$ (support of $p_{X}$)
and \textit{a.s.} means almost surely.
For $f$-leakage, it can be shown in a similar way. 
Finally, DPI property of $f$-leakage follows from \cite[Lem 4]{Csiszar:2004:ITS:1166379.1166380},\cite[Thm 7.2]{polyanskiy_lecnotes_fdiv}, and a discussion in \cite[Sec V]{5707067}.

\subsubsection{\textsf{mmse}-leakage}

In addition to the typcal information leakage measures, 
we can define a new information leakage measure, 
\textit{minimum mean squared error-leakage} $\mathcal{L}_{\textsf{mmse}}(X\to Y)$, 
which has the properties $1), 2)$ but does \textit{not} satisfy $3)$ in general. 
Note that we assume that alphabets are continuous here, i.e., $\mathcal{X} = \mathcal{Y} = \br$. 

\begin{definition}[Minimum mean squared error-leakage]
The \textit{minimum mean squared error-leakage} $\mathcal{L}_{\textsf{mmse}}(X\to Y)$ is defined as 
\begin{align}
\mathcal{L}_{\textsf{mmse}}(X\to Y) 
&:= \vV(X) - \inf_{f\colon \mathcal{Y}\to \mathcal{X}} \vE_{X,Y}\left[(X - f(Y))^{2}\right] \\ 
&= \vV(X) - \vE_{Y}\left[\vV(X\mid Y)\right], 
\end{align}
where infimums is over all (measurable) function $f\colon \mathcal{Y}\to \mathcal{X}$ and 
$\vV(X):= \vE_{X}\left[(X - \vE_{X}[X])^{2}\right], \vV(X\mid Y=y):= \vE_{X}\left[(X - \vE_{X}[X\mid Y=y])^{2}\right]$ 
are variance of $X$ and conditional variance of $X$ given $Y=y$, respectively.
\end{definition}

\begin{prop} \label{prop:property_mmse_leakage}
$\mathcal{L}_{\textsf{mmse}}(X\to Y)$ has the properties $1), 2)$ but does not satisfy $3)$ in general. 
\end{prop}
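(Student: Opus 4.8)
The plan is to verify the three claims one at a time, assuming $\vE_{X}[X^{2}]<\infty$ throughout so that every variance below is finite; the one tool used repeatedly is the law of total variance $\vV(X)=\vE_{Y}\left[\vV(X\mid Y)\right]+\vV\!\left(\vE_{X}[X\mid Y]\right)$, applied to various conditionings. For property $1)$ I would first observe that the infimum defining $\mathcal{L}_{\textsf{mmse}}(X\to Y)$ is attained at the conditional mean $f^{*}(y)=\vE_{X}[X\mid Y=y]$: the decomposition $\vE_{X,Y}\left[(X-f(Y))^{2}\right]=\vE_{Y}\left[\vV(X\mid Y)\right]+\vE_{Y}\left[(\vE_{X}[X\mid Y]-f(Y))^{2}\right]$ shows every $f$ has MSE at least $\vE_{Y}[\vV(X\mid Y)]$, with equality iff $f=f^{*}$ a.s.; this is exactly the computation in the Remark following Proposition \ref{prop:opt_Bayes_randomized} specialized to the squared loss $\ell(x,a)=(x-a)^{2}$, now taken over measurable $f\colon\mathcal{Y}\to\mathcal{X}$. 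Non-negativity is then immediate, since $\mathcal{L}_{\textsf{mmse}}(X\to Y)=\vV(X)-\vE_{Y}[\vV(X\mid Y)]=\vV\!\left(\vE_{X}[X\mid Y]\right)\geq 0$ (equivalently, the constant estimator $f\equiv\vE_{X}[X]$ already achieves MSE $\vV(X)$).

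For property $2)$, assume $X-Y-Z$ is a Markov chain. Because $\vV(X)$ does not involve the channel, it suffices to prove $\vE_{Z}[\vV(X\mid Z)]\geq\vE_{Y}[\vV(X\mid Y)]$, i.e.\ that the minimum mean squared error in estimating $X$ cannot decrease when $Y$ is replaced by the further-processed $Z$. I would argue in two steps. First, the Markov property gives $p_{X\mid Y,Z}=p_{X\mid Y}$, hence $\vV(X\mid Y=y,Z=z)=\vV(X\mid Y=y)$ and therefore $\vE_{Y,Z}[\vV(X\mid Y,Z)]=\vE_{Y}[\vV(X\mid Y)]$. Second, applying the law of total variance to the conditional law of $X$ given $Z$ yields $\vV(X\mid Z)=\vE\!\left[\vV(X\mid Y,Z)\mid Z\right]+\vV\!\left(\vE_{X}[X\mid Y,Z]\mid Z\right)\geq\vE\!\left[\vV(X\mid Y,Z)\mid Z\right]$; taking $\vE_{Z}[\cdot]$ gives $\vE_{Z}[\vV(X\mid Z)]\geq\vE_{Y,Z}[\vV(X\mid Y,Z)]$. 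Combining the two steps gives the inequality, hence $\mathcal{L}_{\textsf{mmse}}(X\to Z)=\vV(X)-\vE_{Z}[\vV(X\mid Z)]\leq\vV(X)-\vE_{Y}[\vV(X\mid Y)]=\mathcal{L}_{\textsf{mmse}}(X\to Y)$.

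For property $3)$, I would exhibit a single counterexample. By the identity proved in step $1)$, $\mathcal{L}_{\textsf{mmse}}(X\to Y)=0$ is equivalent to $\vV(\vE_{X}[X\mid Y])=0$, i.e.\ to $\vE_{X}[X\mid Y]$ being a.s.\ constant --- a condition on the conditional \emph{mean} alone, which leaves room for genuine dependence through higher moments. Concretely, take $X\sim N(0,1)$ and $Y=X^{2}$: by symmetry $\vE_{X}[X\mid Y=t]=0$ for every $t\geq 0$, so $\mathcal{L}_{\textsf{mmse}}(X\to Y)=0$, while $Y$ is a non-constant (deterministic) function of $X$ and so $X$ and $Y$ are far from independent. (An elementary alternative: $Y$ uniform on $\{1,2\}$ and $X=YW$ with $W$ uniform on $\{-1,1\}$ independent of $Y$, for which again $\vE_{X}[X\mid Y]\equiv 0$.) Thus the implication ``$\mathcal{L}_{\textsf{mmse}}(X\to Y)=0\Rightarrow X\independent Y$'' of property $3)$ fails in general.

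The only delicate point is property $2)$: since $Z$ need not be a deterministic function of $Y$, one cannot dismiss it as mere post-processing of a fixed estimator, and the clean route is the pair of total-variance identities above together with the Markov identity $\vE_{X}[X\mid Y,Z]=\vE_{X}[X\mid Y]$. Everything else reduces to the law of total variance and is routine.
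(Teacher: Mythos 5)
Your proof is correct, but it takes a partly different route from the paper's. For property $2)$, the paper does not use the conditional law of total variance; it invokes the orthogonality principle to obtain the exact identity $\mathcal{L}_{\textsf{mmse}}(X\to Y)-\mathcal{L}_{\textsf{mmse}}(X\to Z)=\vE_{Y,Z}\left[\left(\vE_{X}[X\mid Y]-\vE_{X}[X\mid Z]\right)^{2}\right]$ for a Markov chain $X-Y-Z$ (an argument borrowed from Wu--Verd\'u), which is sharper than your inequality in that it immediately gives the equality condition $\vE_{X}[X\mid Y]=\vE_{X}[X\mid Z]$ a.s.\ noted in the paper's footnote; your two-step route (the Markov identity $p_{X\mid Y,Z}=p_{X\mid Y}$ followed by the conditional total-variance decomposition of $\vV(X\mid Z)$) proves only the monotonicity, which is all the proposition requires and is arguably more elementary --- indeed the term you discard, $\vE_{Z}\left[\vV\left(\vE_{X}[X\mid Y]\mid Z\right)\right]$, equals the paper's gap, so the two computations are reconcilable. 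For property $3)$, both you and the paper reduce $\mathcal{L}_{\textsf{mmse}}(X\to Y)=0$ to mean independence $\vE_{X}[X\mid Y]=\vE_{X}[X]$ a.s.\ via the law of total variance; the paper stops there and appeals to the known fact that mean independence is strictly weaker than independence, whereas you exhibit concrete witnesses ($X\sim N(0,1)$ with $Y=X^{2}$, or the discrete $X=YW$ construction), which makes the failure of the implication $\mathcal{L}_{\textsf{mmse}}(X\to Y)=0\Rightarrow X\independent Y$ self-contained. Property $1)$ is dismissed as trivial in the paper; your identification of the optimal estimator as the conditional mean and of the leakage as $\vV\left(\vE_{X}[X\mid Y]\right)\geq 0$ is a correct, slightly longer justification of the same point.
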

\begin{proof}
Property $1)$ is trivial from the definition of the quantity. 
Property $2)$ can be proved as follows: 
If $X-Y-Z$ forms a Markov chain, then 
$\mathcal{L}_{\textsf{mmse}}(X\to Y) - \mathcal{L}_{\textsf{mmse}}(X\to Z) 
=\vE_{Y, Z}\left[ \left( \vE_{X}[X\mid Y] - \vE_{X}[X \mid Z] \right)^{2} \right] \geq 0$,
where we used the \textit{the orthogonal principle}\footnote{For any function $f(Y)$, $\vE_{X,Y}\left[(X-\vE_{X}[X\mid Y])f(Y)\right]=0$. }
in the first equality\footnote{This proof is borrowed from \cite[Thm 11]{6084749}. 
Interestingly, unlike the DPI for mutual information $I(X; Y)$, $\mathcal{L}_{\textsf{mmse}}(X\to Y) = \mathcal{L}_{\textsf{mmse}}(X\to Z)$ 
does \textit{not} imply that $Z$ is a sufficient statistic of $Y$ for $X$. 
The equality holds iff $\vE_{X}[X\mid Y] = \vE_{X}[X \mid Z] \ \textit{a.s}$. 
}.
Finally, it follows from \textit{the law of total variance}
$\vV(X) = \vE_{Y}\left[\vV(X\mid Y)\right] + \vV \left( \vE_{X}\left[X\mid Y\right]\right)$
that 
$\mathcal{L}_{\textsf{mmse}}(X\to  Y) = 0 
\Longleftrightarrow \vV \left( \vE_{Y}\left[X \mid Y\right] \right) 
= \vE_{Y}\left[ \left( \vE_{X}\left[X\mid Y\right] - \vE_{X}[X]\right)^{2}\right] = 0 
\Longleftrightarrow \vE_{X}\left[X\mid Y\right] = \vE_{X}[X] \textit{ a.s.}$ 
The equality condition $\vE_{X}\left[X\mid Y\right] = \vE_{X}[X] \textit{ a.s.}$ is often called a \textit{mean independence}, 
which is known as a weaker condition than independence $3)$, i.e., 
$X\independent Y \Longrightarrow \vE_{X}\left[X\mid Y\right] 
= \vE_{X}[X] \textit{ a.s.}$ 
\footnote{On the other hand, the mean independence is a stronger condition than \textit{uncorrelatedness}, i.e., 
$\vE_{X}[X\mid Y] \quad \textit{a.s.} \Longrightarrow \rho(X, Y) = 0$,
where $\rho(X, Y):= {(\vE_{X, Y}[XY]-\vE_{X}[X]\vE_{Y}[Y]})/{\sqrt{\vV(X)}\sqrt{\vV(Y)}]}$ is 
the coefficient of correlation between $X$ and $Y$. } 
\end{proof}

\begin{remark}
As with the \textsf{mmse}-leakage $\mathcal{L}_{\textsf{mmse}}(X\to Y)$, 
Arimoto's MI of order $\alpha=\infty$, i.e., $I_{\infty}^{\text{A}}(X; Y)$ does not have the 
independence property 3) (see \cite[Sec 6.6]{asoodeh_phdthesis}).
\end{remark}

\section{A Generalization of the Value of Information}\label{sec:gen_VoI}
In this section, we introduce \textit{the Stratonovich's Value of Information} (VoI) in a general manner 
to formulate the leakage-utility trade-off problem. 
We also show that the generalized VoI can be viewed as an analogue of \textit{the distortion-rate function} and \textit{the information bottleneck}.

\subsection{Average gain} 
We first introduce \textit{average gain} to quantify the utility of using $Y$ for a 
decision-making as largest reduction of the minimal Bayes risk compared to 
independent case.

%


\begin{definition}[Average gain] \label{def:ave_gain} 
The average gain of using $Y$ on $X$ for making an action $A$ when a loss function is $\ell(x, y, a, \delta^{*})$ 
is defined as
\begin{align}
&\textsf{gain}^{\ell}(X; Y) := \inf_{\delta^{*}} r(\delta^{*}, p_{Y}) - \inf_{\delta^{*}}r(\delta^{*}, p_{Y\mid X}) \\
&=\vE_{Y}\left[\inf_{\delta^{*}(y, \cdot)}\vE_{X}\left[L(X, \delta^{*}(Y,\cdot))\right]\right] \notag \\ 
&\quad - \vE_{Y}\left[\inf_{\delta^{*}(y, \cdot)} \vE_{X}\left[L(X, \delta^{*}(Y, \cdot))\relmiddle{|} Y\right]\right], \label{eq:ave_gain}
\end{align}
where $p_{Y}(y) := \sum_{x}p_{X}(x)p_{Y\mid X}(y\mid x)$ is a marginal distribution on $Y$. 
Note that the last equality follows from Proposition {\ref{prop:opt_Bayes_randomized}}.
In particular, the average gain with a deterministic decision rule a classic loss function $\ell(x, a)$ is given as 
\begin{align}
\textsf{gain}^{\ell}(X; Y) &= \inf_{a} \vE_{X}\left[\ell(X, a)\right] \notag \\
&\quad - \vE_{Y}\left[\inf_{a} \vE_{X}\left[\ell(X, a) \mid Y\right] \right]. 
\end{align}
\end{definition}

\begin{remark}
Note that the average gain is a statistical decision-theoretic counterpart of \textit{the average cost gain} $\Delta C$ defined in \cite{6483382}.
\end{remark}

{Using the similar argument as in \cite[Sec V.F]{9064819}, it follows that 
the average gain satisfies the DPI.
\begin{prop}[\text{\cite[Sec V.F]{9064819}}]
For any loss function $\ell(x, y, a, \delta^{*})$, 
the average gain $\textsf{gain}^{\ell}(X; Y)$ satisfies DPI.
\end{prop}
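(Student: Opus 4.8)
The plan is to unfold the definition of the DPI for the average gain and reduce it to a monotonicity statement about the minimal Bayes risk under post-processing of the observation. Concretely, assume $X-Y-Z$ forms a Markov chain; the goal is to show $\textsf{gain}^{\ell}(X;Z)\le\textsf{gain}^{\ell}(X;Y)$. By Definition~\ref{def:ave_gain}, each average gain is a difference of an ``independent reference'' risk $\inf_{\delta^{*}}r(\delta^{*},p_{Y})$ (resp.\ $\inf_{\delta^{*}}r(\delta^{*},p_{Z})$) and an ``informative'' risk $\inf_{\delta^{*}}r(\delta^{*},p_{Y\mid X})$ (resp.\ $\inf_{\delta^{*}}r(\delta^{*},p_{Z\mid X})$). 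Hence it suffices to establish (i) that the reference term for $Z$ is no larger than the one for $Y$, and (ii) that the informative term for $Z$ is no smaller, i.e.\ $\inf_{\delta^{*}}r(\delta^{*},p_{Z\mid X})\ge\inf_{\delta^{*}}r(\delta^{*},p_{Y\mid X})$.

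For (ii) I would use a simulation argument. Given any randomized rule $\delta_{Z}^{*}\colon\mathcal{Z}\times\mathcal{A}\to[0,1]$ on $Z$, compose it with the post-processing channel $p_{Z\mid Y}$ furnished by the Markov chain to obtain the rule $\delta_{Y}^{*}(y,a):=\sum_{z}p_{Z\mid Y}(z\mid y)\,\delta_{Z}^{*}(z,a)$ on $Y$; since $p_{Z\mid X}=p_{Z\mid Y}\circ p_{Y\mid X}$, this $\delta_{Y}^{*}$ does not increase the attained risk, so $\inf_{\delta_{Y}^{*}}r(\delta_{Y}^{*},p_{Y\mid X})\le r(\delta_{Z}^{*},p_{Z\mid X})$ for every $\delta_{Z}^{*}$, and taking the infimum over $\delta_{Z}^{*}$ gives (ii) (an $\varepsilon$-argument bridges the gap between ``minimum'' and ``infimum''). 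A purely analytic alternative uses Proposition~\ref{prop:opt_Bayes_randomized}: by Markovity $p_{X\mid Y,Z}=p_{X\mid Y}$, so the posterior $p_{X\mid Z=z}$ is the $p_{Y\mid Z}(\cdot\mid z)$-mixture of the posteriors $p_{X\mid Y=y}$; since the inner objective enters only through an infimum and the infimum of a mixture dominates the mixture of the infima, taking $\vE_{Z}$ and collapsing $\vE_{Z}\vE_{Y\mid Z}=\vE_{Y}$ by the tower property yields (ii) again.

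For (i), if $\ell$ is a classical loss $\ell(x,a)$ the reference risk is observation-free: $\inf_{\delta^{*}}r(\delta^{*},p_{Y})=\inf_{a}\vE_{X}[\ell(X,a)]=\inf_{\delta^{*}}r(\delta^{*},p_{Z})$, so (i) holds with equality and, combined with (ii), the DPI follows immediately. Establishing (i) for the fully general loss $\ell(x,y,a,\delta^{*})$ --- which depends on the observed sample and on the decision rule itself --- is the main obstacle: one must check that the lifting $\delta_{Z}^{*}\mapsto\delta_{Y}^{*}$ in step (ii) still does not increase the loss once the loss ``sees'' the observation, and that the two reference terms compare in the correct direction. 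This is exactly the point where we invoke the argument of \cite[Sec V.F]{9064819}: under its structural hypotheses on how the loss may depend on the observation, both (i) and (ii) remain valid, and the DPI for $\textsf{gain}^{\ell}$ follows.
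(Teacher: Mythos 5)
Your proposal is essentially correct and in fact more explicit than the paper, which gives no proof of this proposition at all and simply invokes \cite[Sec V.F]{9064819}; your ``purely analytic alternative'' is exactly the mechanism of that reference: for a fixed rule the objective is linear in the posterior, so the minimal conditional Bayes risk $\Phi(p):=\inf_{\delta^{*}(y,\cdot)}\sum_{x}p(x)L(x,\delta^{*}(y,\cdot))$ is concave in $p$, Markovity makes $p_{X\mid Z=z}$ a $p_{Y\mid Z}(\cdot\mid z)$-mixture of the posteriors $p_{X\mid Y=y}$, and Jensen plus the tower property yields your step (ii). Two refinements. First, the obstacle you flag in step (i) is resolved here not by extra hypotheses from \cite{9064819} but by the paper's standing restriction to standard losses: for a standard loss the reference term is an infimum over distributions on $\mathcal{A}$ of a quantity not involving the observation, hence equals $\Phi(p_{X})$ for both $Y$ and $Z$, so (i) holds with equality; for a loss that genuinely ``sees'' $y$ (e.g.\ $\ell=d(x,y)$) the two reference terms are not even comparable, so ``any loss function'' should be read as ``any standard loss.'' Second, your simulation lifting $\delta^{*}_{Y}(y,a)=\sum_{z}p_{Z\mid Y}(z\mid y)\,\delta^{*}_{Z}(z,a)$ needs convexity of $\delta^{*}(y,\cdot)\mapsto L(x,\delta^{*}(y,\cdot))$ once the loss depends on the rule itself (true for the $\alpha$-loss, but it is an extra check); the posterior-mixture route avoids this and is the cleaner argument.
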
}

\begin{eg}  \label{eg:sq_loss}
When a decision maker's action is to estimate $X$ deterministically under a squared-loss, i.e., 
$A=\hat{X}=\delta(Y), \ell_{\text{sq}}(x, \hat{x}) := (x-\hat{x})^{2}$, 
$\textsf{gain}^{\ell_{\text{sq}}}(X; Y) = \mathcal{L}_{\textsf{mmse}}(X\to Y)$. 
\end{eg}

\begin{eg} \label{eg:alpha_loss}
When a decision maker's action is to estimate $X$ randomly under an $\alpha$-loss proposed by Liao \textit{et al.} in \cite[Def 3]{8804205}\footnote{Technically,  Liao \textit{et al.} call $L_{\alpha}(x, \delta^{*}(y, \cdot))
:=\vE_{\hat{X}}[\ell_{\alpha}(x, y, \hat{X}, \delta^{*}) \mid Y=y]$ itself as $\alpha$-loss.
Note that the value of $L_{\alpha}(x, \delta^{*}(y, \cdot))$ is extended by continuity to $\alpha=1$ and $\alpha=\infty$.}, i.e., 
$A=\hat{X}, \ell_{\alpha}(x, y, \hat{x}, \delta^{*}) := \frac{\alpha}{\alpha-1} \left(1 - \delta^{*}(y, \hat{x})^{\frac{-1}{\alpha}}\one{\hat{x}=x} \right)$, 
\begin{align}
&\textsf{gain}^{\ell_{\alpha}}(X; Y) \notag \\
&=\begin{cases}
\frac{\alpha}{\alpha-1} \left( e^{\frac{1-\alpha}{\alpha} \cdot H_{\alpha}^{\text{A}}(X\mid Y)} 
- e^{\frac{1-\alpha}{\alpha}\cdot H_{\alpha}(X)}  \right), & \alpha>1 \\ 
H(X) - H(X\mid Y)=I(X; Y), & \alpha=1.
\end{cases}\label{eq:gain_eg} 
\end{align}
where \eqref{eq:gain_eg} follows from \cite[Lem 1]{8804205}.

\end{eg}

Intuitively, the optimal decision rule \eqref{eq:bayes_definition} seems not to depend on $y$ 
when the independent channel $p_{Y}$ is used, 
however, it is not the case in general loss function $\ell(x, y, a, \delta^{*})$. 
Thus we restrict the loss function to the following {\textit{standard loss}} class.

\begin{definition}[Standard loss]
The loss function $\ell(x, y, a, \delta^{*})$ is said to be a \textit{standard loss} if there exists a function 
$\tilde{\ell}\colon \mathcal{X}\times \mathcal{A} \times [0, 1]\to \br_{+}; (x, a, p)\mapsto \tilde{\ell}(x, a, p)$ such that 
for all $x, y, a$ and $\delta^{*}$, 
\begin{align}
\ell(x, y, a, \delta^{*}) &= \tilde{\ell}(x, a, \delta^{*}(y, a)).
\end{align}
\end{definition}

\begin{eg}
The classical loss function $\ell(x, a)$ and the $\alpha$-loss $\ell_{\alpha}(x, y, \hat{x}, \delta^{*})$ in the Example \ref{eg:alpha_loss} are
typical examples of the standard loss.
\end{eg}

\begin{prop}
For a standard loss $\ell(x, y, a, \delta^{*})$, 
the optimal decision rule \eqref{eq:bayes_definition} does not depend on $y$ when a channel is independent.
\end{prop}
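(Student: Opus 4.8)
The plan is to show that for a standard loss, the inner optimization problem defining $\delta^{*,\text{Bayes}}(y,\cdot)$ becomes independent of $y$ once the channel is $p_{Y\mid X}=p_{Y}$, so that the arginf is achieved by the same distribution over $\mathcal{A}$ for every $y$. First I would write out $L(x,\delta^{*}(y,\cdot))$ under the standard-loss assumption: since $\ell(x,y,a,\delta^{*})=\tilde{\ell}(x,a,\delta^{*}(y,a))$, we get
\begin{align}
L(x,\delta^{*}(y,\cdot)) = \sum_{a} p_{A\mid Y}(a\mid y)\,\tilde{\ell}\bigl(x,a,p_{A\mid Y}(a\mid y)\bigr),
\end{align}
which depends on $y$ only through the vector $q(\cdot):=p_{A\mid Y}(\cdot\mid y)\in\Delta(\mathcal{A})$. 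In other words there is a functional $G(x,q):=\sum_a q(a)\tilde{\ell}(x,a,q(a))$ with $L(x,\delta^{*}(y,\cdot))=G(x,q)$.

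Next I would invoke Proposition \ref{prop:opt_Bayes_randomized} in the independent case: when the channel is $p_Y$, the conditioning on $Y=y$ in $\vE_{X}[L(X,\delta^{*}(y,\cdot))\mid Y=y]$ drops out because $X\independent Y$, leaving $\vE_{X}[G(X,q)]$, a function of $q$ alone with no residual $y$-dependence. Hence
\begin{align}
\delta^{*,\text{Bayes}}(y,\cdot) = \arginf_{q\in\Delta(\mathcal{A})} \vE_{X}\bigl[G(X,q)\bigr],
\end{align}
and the right-hand side is a fixed optimization problem not involving $y$. Therefore any minimizer $q^{*}$ may be taken the same for all $y$, i.e.\ $\delta^{*,\text{Bayes}}(y,\cdot)=q^{*}(\cdot)$ independent of $y$, which is the claim.

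The main subtlety — and the step I would be most careful about — is the phrase "does not depend on $y$" when the arginf is set-valued: strictly the statement should be read as "there is a choice of optimal rule that is constant in $y$," since for a particular $y$ one could in principle pick a different minimizer. I would phrase the conclusion as: the objective $\vE_X[G(X,q)]$ being $y$-free, the set of minimizers is the same for every $y$, so selecting one element $q^{*}$ of it yields a $y$-independent Bayes rule. A second minor point is that, for the classical loss $\ell(x,a)$, $\tilde\ell(x,a,p)=\ell(x,a)$ does not depend on its third argument, so $G(x,q)=\sum_a q(a)\ell(x,a)$ is linear in $q$ and the infimum is attained at a point mass on $\argmin_a \vE_X[\ell(X,a)]$, recovering the familiar deterministic Bayes rule; it is worth noting this is consistent with the Remark following Proposition \ref{prop:opt_Bayes_randomized}. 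No compactness or continuity hypotheses beyond those implicit in the finite-alphabet setting are needed, so the argument is essentially a one-line observation once the standard-loss structure is unpacked.
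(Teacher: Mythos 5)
Your proposal is correct and matches the paper's own argument: both reduce, under the standard-loss assumption and the independent channel, the objective $\vE_{X}\left[L(X,\delta^{*}(y,\cdot))\right]=\sum_{x}p_{X}(x)\sum_{a}\delta^{*}(y,a)\tilde{\ell}(x,a,\delta^{*}(y,a))$ to a functional of the distribution $\delta^{*}(y,\cdot)$ alone, with no residual $y$-dependence, so the minimizing rule can be taken the same for every $y$. Your additional remarks on the set-valued arginf and the classical-loss specialization are fine but not a different method; the paper's one-line proof is the same observation.
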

\begin{proof} Since 
\begin{align}
&\inf_{\delta^{*}(y, \cdot)} \vE_{X}[L(X, \delta^{*}(y,\cdot))] \notag \\
&= \inf_{\delta^{*}(y, \cdot)}\sum_{x} p_{X}(x)\sum_{a}\delta^{*}(y, a)\tilde{\ell}(x, a, \delta^{*}(y, a))
\end{align}
is constant regardless of the value of $y$, the optimal decision rule \eqref{eq:bayes_definition} does not depend on $y$. 
\end{proof}

\subsection{A Generalization of the Value of Information}

We define VoI for information leakage to formulate the leakage-utility trade-off problem.
In the following, we assume that the information leakage $\mathcal{L}(X\to Y)$ is bounded above, i.e., 
there exists an upper bound $K(X)$ that can depend on $p_{X}$ such that for all $p_{Y\mid X}$, $\mathcal{L}(X\to Y) \leq K(X)$.

\begin{definition} \label{def:VoI_for_information_leakage} 
Let the loss function $\ell(x, y, a, \delta^{*})$ be a standard loss.
For $0\leq R\leq K(X)$, 
the \textit{generalized value of information for information leakage $\mathcal{L}(X\to Y)$} is defined as 
\begin{align}
&\textsf{V}_{\mathcal{L}}^{\ell}(R; \mathcal{Y}) 
:= \sup_{\substack{p_{Y\mid X}\colon \\ 
{\mathcal{L}(X\to Y)}\leq R}} \textsf{gain}^{\ell}(X; Y) \\ 
&= 
\inf_{\delta^{*}(y, \cdot)} \vE_{X}\left[L(X, \delta^{*}(y,\cdot))\right]
\notag \\
&\quad - \inf_{\substack{p_{Y\mid X}\colon \\ 
\mathcal{L}(X\to Y)\leq R}} \vE_{Y}\left[\inf_{\delta^{*}(y, \cdot)} \vE_{X}\left[L(X, \delta^{*}(Y, \cdot))\relmiddle{|} Y\right]\right]. 
\label{eq:VoI_leakage}
\end{align}
\end{definition}
In particular, VoI for a deterministic decision rule and a classical loss function $\ell(x, a)$ is given as
\begin{align}
\textsf{V}_{\mathcal{L}}^{\ell}(R; \mathcal{Y})&=\inf_{a} \vE_{X}\left[\ell(X, a)\right] \notag \\
&- \inf_{\substack{p_{Y\mid X}\colon \\ 
{\mathcal{L}_{\alpha}(X\to Y)}\leq R}}\vE_{Y}\left[\inf_{a} \vE_{X}\left[\ell(X, a) \mid Y\right] \right]. 
\label{eq:gen_VoI_for_classic_loss}
\end{align}

\begin{remark}
Stratonovich's original formulation of VoI is when $\mathcal{L}(X\to Y) = I(X; Y)$ and classical loss $\ell(x, a)$. 
Note that the second term of the generalized VoI  
$U(R; \mathcal{Y}) := \inf_{\substack{p_{Y\mid X}\colon \\ 
\mathcal{L}(X\to Y)\leq R}} \vE_{Y}\left[\inf_{\delta^{*}(y, \cdot)} \vE_{X}\left[L(X, \delta^{*}(Y, \cdot))\relmiddle{|} Y\right]\right]$ 
will be the \textit{distortion-rate function} $D(R; \mathcal{Y})$ under a non-standard loss function $\ell(x, y, a, \delta^{*}) = d(x, y)$, 
where $d(x, y)$ is a distortion function, which is not appropriate loss for a 
decision-making context since it only measures the distortion between $x$ and $y$.
\end{remark}

\begin{eg} From the Example \ref{eg:sq_loss} and Example \ref{eg:alpha_loss}, 
it follows immediately that 
\begin{align}
\textsf{V}_{\mathcal{L}_{\textsf{mmse}}}^{\ell_{\text{sq}}} (R; \mathcal{Y}) &= R, \qquad 0\leq R \leq \vV(X), \\ 
\textsf{V}_{I}^{\ell_{\alpha=1}}(R; \mathcal{Y}) &= R, \qquad 0\leq R \leq H(X), 
\end{align}
for all alphabet $\mathcal{Y}$.

\end{eg}

\begin{eg} 
When an action is to estimate $U$ correlated only with $X$, i.e., $A=\hat{U}$ under $\alpha=1$-loss 
$\ell_{\alpha=1}^{U}(u, y, \hat{u}, \delta^{*}) := \frac{\alpha}{\alpha-1} \left(1 - \delta^{*}(y, \hat{u})^{\frac{-1}{\alpha}}\one{\hat{u}=u} \right)$ 
and the information leakage constraint $\mathcal{L}(X\to Y) = I(X; Y)\leq R$, 
the generalized VoI is given as 
\begin{align}
\textsf{V}_{I}^{\ell_{\alpha=1}^{U}}(R; \mathcal{Y}) 
&:=  \sup_{\substack{p_{Y\mid X}\colon \\ 
{I(X; Y)}\leq R}} \textsf{gain}^{\ell_{\alpha=1}^{U}}(U; Y) \\ 
&= \sup_{\substack{p_{Y\mid X}\colon \\ 
{I(X; Y)}\leq R}} I(U; Y).
\end{align}
Note that this quantity is the well-known \textit{information bottleneck} \cite{Tishby99theinformation}.
\end{eg}

\section{Main results} \label{sec:main_result}
The main results of this paper are an upper bound of the VoI for a standard loss and 
a fundamental limit of the VoI for a classical loss.

\subsection{Upper bound and Fundamental Limit}
For a standard loss $\ell(x, y, a, \delta^{*})$, following upper bound holds.
\begin{prop} \label{prop:ub_for_information_leakage}
For a standard loss $\ell(x, y, a, \delta^{*})$, define a function as follows:
\begin{align} 
\bar{\textsf{V}}_{\mathcal{L}}^{\ell}(R; \mathcal{Y}) 
&:=\inf_{\delta^{*}(y, \cdot)} \vE_{X}\left[L(X, \delta^{*}(y,\cdot))\right] \notag \\ 
&\quad -\displaystyle \inf_{\substack{p_{Y\mid X}, \delta^{*}\colon \\ 
{\mathcal{L}(X\to A)} \leq R}} \vE_{X, Y}\left[L(X, \delta^{*}(Y, \cdot))\right].
\label{eq:gen_VoI_for_std_loss}
\end{align}
Then $\bar{\textsf{V}}_{\mathcal{L}}^{\ell}(0) = 0$ and for $0\leq R\leq K(X)$ and 
arbitrary alphabet $\mathcal{Y}$, 
\begin{align}
{\textsf{V}}_{\mathcal{L}}^{\ell}(R; \mathcal{Y}) 
\leq \bar{\textsf{V}}_{\mathcal{L}}^{\ell}(R; \mathcal{Y}). 
\label{eq:ub_information_leakage_for_std_loss}
\end{align}
\end{prop}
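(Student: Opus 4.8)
The plan is to exploit two facts already available: that for a standard loss the ``no‑information'' term $\inf_{\delta^{*}(y,\cdot)}\vE_{X}[L(X,\delta^{*}(y,\cdot))]$ depends neither on $y$ nor on the channel (so the first terms of \eqref{eq:VoI_leakage} and \eqref{eq:gen_VoI_for_std_loss} literally coincide), and that the iterated infimum in the definition of $\textsf{V}_{\mathcal{L}}^{\ell}$ is realized by the Bayes rule $\delta^{*,\text{Bayes}}$ of Proposition~\ref{prop:opt_Bayes_randomized}, for which $r(\delta^{*,\text{Bayes}},p_{Y\mid X})=\vE_{X,Y}[L(X,\delta^{*,\text{Bayes}}(Y,\cdot))]$. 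First I would reduce \eqref{eq:ub_information_leakage_for_std_loss} to the single inequality
\begin{align*}
&\inf_{\substack{p_{Y\mid X},\,\delta^{*}\colon \\ \mathcal{L}(X\to A)\le R}}\vE_{X,Y}\bigl[L(X,\delta^{*}(Y,\cdot))\bigr] \\
&\qquad\le\; \inf_{\substack{p_{Y\mid X}\colon \\ \mathcal{L}(X\to Y)\le R}}\vE_{Y}\Bigl[\inf_{\delta^{*}(y,\cdot)}\vE_{X}\bigl[L(X,\delta^{*}(Y,\cdot))\mid Y\bigr]\Bigr],
\end{align*}
i.e.\ that enlarging the admissible set by imposing the leakage budget on the action $A$ rather than on $Y$ (and jointly optimizing over the rule) can only decrease the attainable expected loss.

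For the core step I would fix any channel $p_{Y\mid X}$ feasible for the right‑hand side ($\mathcal{L}(X\to Y)\le R$) and take its Bayes rule $\delta^{*,\text{Bayes}}$. By Proposition~\ref{prop:opt_Bayes_randomized} the right‑hand inner quantity equals $\vE_{X,Y}[L(X,\delta^{*,\text{Bayes}}(Y,\cdot))]$. Since under $\delta^{*,\text{Bayes}}$ the action is generated from $Y$ alone, $X-Y-A$ is a Markov chain, so the DPI property of $\mathcal{L}$ gives $\mathcal{L}(X\to A)\le\mathcal{L}(X\to Y)\le R$; hence $(p_{Y\mid X},\delta^{*,\text{Bayes}})$ is admissible in the left‑hand infimum, and the left‑hand side is $\le\vE_{X,Y}[L(X,\delta^{*,\text{Bayes}}(Y,\cdot))]$. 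Taking the infimum over feasible $p_{Y\mid X}$ on the right gives the inequality (if the inner infimum is not attained, repeat with an $\varepsilon$‑optimizer and let $\varepsilon\downarrow0$). Note that only the DPI property~$2)$ of $\mathcal{L}$ is used here, and that the inequality is genuinely an upper bound because the left infimum ranges over the weaker constraint — a channel with $\mathcal{L}(X\to Y)>R$ may still satisfy $\mathcal{L}(X\to A)\le R$ once the rule discards enough information.

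Finally, for $\bar{\textsf{V}}_{\mathcal{L}}^{\ell}(0)=0$: the independent channel $p_{Y\mid X}=p_{Y}$ with the optimal (constant) rule is admissible (then $X\independent A$, so $\mathcal{L}(X\to A)=0$ by property~$3)$) and attains exactly the first term, whence $\bar{\textsf{V}}_{\mathcal{L}}^{\ell}(0)\ge0$ — this also follows from \eqref{eq:ub_information_leakage_for_std_loss} together with $\textsf{V}_{\mathcal{L}}^{\ell}(0;\mathcal{Y})=0$. For the reverse, $R=0$ with non‑negativity forces $\mathcal{L}(X\to A)=0$, hence $X\independent A$; for a classical loss $\ell(x,a)$ this gives $\vE_{X,Y}[L(X,\delta^{*}(Y,\cdot))]=\vE_{X,A}[\ell(X,A)]=\sum_{a}p_{A}(a)\vE_{X}[\ell(X,a)]\ge\inf_{a}\vE_{X}[\ell(X,a)]$, which is the first term, so $\bar{\textsf{V}}_{\mathcal{L}}^{\ell}(0)\le0$. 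I expect the only real obstacle to be bookkeeping: keeping the two leakage budgets $\mathcal{L}(X\to Y)$ and $\mathcal{L}(X\to A)$ distinct throughout, and — for the $R=0$ claim under a general standard loss $\tilde{\ell}(x,a,p)$ — handling the dependence of $\tilde{\ell}$ on the decision probabilities (where the clean factorization above no longer pulls $\tilde{\ell}$ out of the expectation); the heart of the argument is just DPI together with Proposition~\ref{prop:opt_Bayes_randomized}.
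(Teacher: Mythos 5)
Your proof of the inequality \eqref{eq:ub_information_leakage_for_std_loss} is essentially the paper's own argument: the appendix likewise notes that the first terms coincide for a standard loss, takes the minimizing channel of the constrained problem together with its Bayes rule $\delta^{*,\text{Bayes}}$, and uses the Markov chain $X-Y-A$ with the DPI property to certify that the pair is feasible in the joint infimum over $(p_{Y\mid X},\delta^{*})$, which gives exactly your reduction $\bar{U}\le\tilde{U}$; your $\varepsilon$-optimizer remark is only a minor refinement of the paper's use of an $\arginf$.

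On the claim $\bar{\textsf{V}}_{\mathcal{L}}^{\ell}(0)=0$ you have actually gone further than the paper, whose appendix proves only the inequality and never addresses the $R=0$ statement. Your lower bound $\bar{\textsf{V}}_{\mathcal{L}}^{\ell}(0)\ge 0$ (independent channel with a constant near-optimal rule, using $X\independent A\Rightarrow\mathcal{L}(X\to A)=0$) is correct, and the difficulty you flag for the reverse direction under a general standard loss is genuine: the classical-loss factorization works because independence of $A$ and $X$ lets you pull $\ell(x,a)$ out, whereas for a general $\tilde{\ell}(x,a,p)$ the quantity $\sum_{a}\delta^{*}(y,a)\tilde{\ell}(x,a,\delta^{*}(y,a))$ need not be convex in the decision vector, and one can construct (pathological) standard losses, channels and rules with $A\independent X$ whose expected loss falls strictly below $\inf_{\delta^{*}(y,\cdot)}\vE_{X}\left[L(X,\delta^{*}(y,\cdot))\right]$, so that part of the statement requires either extra assumptions on $\tilde{\ell}$ (e.g.\ convexity of the induced loss in $\delta^{*}(y,\cdot)$, which holds for the $\alpha$-loss and rescues the claim via Jensen) or the restriction to classical losses that you carried out. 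In short, your argument matches the paper's where the paper gives one, and your reservation concerns a point the paper leaves unproved.
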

\begin{proof}
See Appendix \ref{proof:ub_for_information_leakage}.
\end{proof}

Note that the upper bound \eqref{eq:ub_information_leakage_for_std_loss} still 
depends on the alphabet $\mathcal{Y}$. 
Interestingly, when it comes to the classical loss function $\ell(x, a)$, 
corresponding upper bound is independent on the alphabet $\mathcal{Y}$ and it is even achievable.
\begin{theorem} \label{thm:main_result_for_information_leakage}
For a classical loss $\ell(x,a)$, define a function as follows:
\begin{align} 
\textsf{V}_{\mathcal{L}}^{\ell}(R) 
&:=
\inf_{a} \vE_{X}\left[\ell(X, a)\right] 
-  \displaystyle \inf_{\substack{p_{A\mid X}\colon \\ 
{\mathcal{L}(X\to A)} \leq R}} \vE_{X, A}\left[\ell(X, A)\right].
\label{eq:gen_VoI_fundamental_for_classic_loss}
\end{align}
Then $\textsf{V}_{\mathcal{L}}^{\ell}(0) = 0$ and 
for $0\leq R\leq K(X)$ and arbitrary alphabet $\mathcal{Y}$, 
\begin{align}
\textsf{V}_{\mathcal{L}}^{\ell}(R; \mathcal{Y}) 
\leq \textsf{V}_{\mathcal{L}}^{\ell}(R). 
\label{eq:main_result_for_information_leakage}
\end{align}
Moreover, let $t(A)$ be a \textit{sufficient statistic of $A$ for $X$} and $t(\mathcal{A})$ be a set of all values of the statistic.
Then the equality in the inequality \eqref{eq:main_result_for_information_leakage} holds 
when $\mathcal{Y}= t(\mathcal{A})$ and the optimal mechanism is given by 
\begin{align}
p^{*}_{Y\mid X}(y\mid x) := \sum_{a}p^{*}_{A\mid X}(a\mid x) \one{y = t(a)}, 
\end{align}
where $p^{*}_{A\mid X} = \arginf_{p_{A\mid X}\colon \mathcal{L}(X\to A)\leq R}\vE_{X, A}\left[\ell(X, A)\right]$. 

The statement above can be summarized as follows: 
\begin{align}
\sup_{\mathcal{Y}} \textsf{V}_{\mathcal{L}}^{\ell}(R; \mathcal{Y}) &= \textsf{V}_{\mathcal{L}}^{\ell}(R).
\end{align}
\end{theorem}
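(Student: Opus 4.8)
The plan is to prove three facts separately: (i) the boundary value $\textsf{V}_{\mathcal{L}}^{\ell}(0)=0$; (ii) the inequality \eqref{eq:main_result_for_information_leakage} for every alphabet $\mathcal{Y}$; and (iii) that equality is attained with $\mathcal{Y}=t(\mathcal{A})$ and the stated channel $p^{*}_{Y\mid X}$. The closing identity $\sup_{\mathcal{Y}}\textsf{V}_{\mathcal{L}}^{\ell}(R;\mathcal{Y})=\textsf{V}_{\mathcal{L}}^{\ell}(R)$ is then immediate from (ii) and (iii).

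For (i), at $R=0$ non-negativity of $\mathcal{L}$ (Definition \ref{def:info_leakage}) together with the constraint $\mathcal{L}(X\to A)\le 0$ force $\mathcal{L}(X\to A)=0$, hence $X\independent A$ by the independence property; thus $\vE_{X,A}[\ell(X,A)]=\sum_{a}p_{A}(a)\vE_{X}[\ell(X,a)]\ge\inf_{a}\vE_{X}[\ell(X,a)]$, with equality for the degenerate channel concentrated on $\arginf_{a}\vE_{X}[\ell(X,a)]$, so the infimum in \eqref{eq:gen_VoI_fundamental_for_classic_loss} equals $\inf_{a}\vE_{X}[\ell(X,a)]$ and $\textsf{V}_{\mathcal{L}}^{\ell}(0)=0$.

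For (ii), I would fix any $p_{Y\mid X}$ with $\mathcal{L}(X\to Y)\le R$ and let $\delta^{\text{Bayes}}(y)=\arginf_{a}\vE_{X}[\ell(X,a)\mid Y=y]$, so that the second term of $\textsf{V}_{\mathcal{L}}^{\ell}(R;\mathcal{Y})$ equals $\vE_{X,Y}[\ell(X,\delta^{\text{Bayes}}(Y))]$. Setting $A:=\delta^{\text{Bayes}}(Y)$ makes $X-Y-A$ a Markov chain, so the DPI gives $\mathcal{L}(X\to A)\le\mathcal{L}(X\to Y)\le R$; hence the channel $p_{A\mid X}$ induced by $X\to Y\to A$ is feasible in \eqref{eq:gen_VoI_fundamental_for_classic_loss}, and $\vE_{X,Y}[\ell(X,\delta^{\text{Bayes}}(Y))]=\vE_{X,A}[\ell(X,A)]\ge\inf_{p_{A\mid X}\colon\mathcal{L}(X\to A)\le R}\vE_{X,A}[\ell(X,A)]$. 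Taking the infimum of the left-hand side over feasible $p_{Y\mid X}$ and subtracting these quantities from $\inf_{a}\vE_{X}[\ell(X,a)]$ (which reverses the inequality) yields \eqref{eq:main_result_for_information_leakage}, with no dependence on $\mathcal{Y}$.

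For (iii), let $p^{*}_{A\mid X}$ attain the infimum in \eqref{eq:gen_VoI_fundamental_for_classic_loss} (attained by compactness since the alphabets are finite, or else replace it by a near-optimizer and pass to the limit), let $A^{*}$ be the associated variable, $t$ the sufficient statistic of the statement, and $Y^{*}:=t(A^{*})$, which realizes $p^{*}_{Y\mid X}$. Since $Y^{*}$ is a function of $A^{*}$, the chain $X-A^{*}-Y^{*}$ and the DPI give $\mathcal{L}(X\to Y^{*})\le\mathcal{L}(X\to A^{*})\le R$, so $p^{*}_{Y\mid X}$ is feasible and part (ii) already gives $\vE_{Y^{*}}[\inf_{a}\vE_{X}[\ell(X,a)\mid Y^{*}]]\ge\vE_{X,A^{*}}[\ell(X,A^{*})]$. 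For the reverse inequality I would use sufficiency of $t$, equivalently that $X-t(A^{*})-A^{*}$ is Markov, which yields $p_{X\mid A^{*}}(\cdot\mid a)=p_{X\mid Y^{*}}(\cdot\mid t(a))$; hence for each $y$ and each $a$ with $t(a)=y$ one has $\inf_{a'}\sum_{x}p_{X\mid Y^{*}}(x\mid y)\ell(x,a')\le\sum_{x}p_{X\mid A^{*}}(x\mid a)\ell(x,a)$, and averaging over such $a$ with weights $p^{*}_{A^{*}}(a)/p^{*}_{Y^{*}}(y)$ and then over $y$ collapses the right-hand side exactly to $\vE_{X,A^{*}}[\ell(X,A^{*})]$, so $\vE_{Y^{*}}[\inf_{a}\vE_{X}[\ell(X,a)\mid Y^{*}]]\le\vE_{X,A^{*}}[\ell(X,A^{*})]$. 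Equality therefore holds, whence $\textsf{gain}^{\ell}(X;Y^{*})=\textsf{V}_{\mathcal{L}}^{\ell}(R)$, $\textsf{V}_{\mathcal{L}}^{\ell}(R;t(\mathcal{A}))=\textsf{V}_{\mathcal{L}}^{\ell}(R)$, and with (ii) this is exactly $\sup_{\mathcal{Y}}\textsf{V}_{\mathcal{L}}^{\ell}(R;\mathcal{Y})=\textsf{V}_{\mathcal{L}}^{\ell}(R)$. The step I expect to be the main obstacle is this reverse inequality in (iii): the crux is recognizing that sufficiency of $t$ is precisely what lets the posterior averaging pass through $\inf_{a'}$, so that the Bayes-optimal action given $Y^{*}$ does no worse on average than $A^{*}$; the feasibility checks via the DPI and the $R=0$ case are routine.
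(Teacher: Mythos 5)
Your proposal is correct and follows essentially the same route as the paper: the converse composes a feasible $p_{Y\mid X}$ with the Bayes rule and invokes the DPI to get a feasible $p_{A\mid X}$, and the achievability uses the sufficiency of $t(A)$ (i.e., $p_{X\mid A}(\cdot\mid a)=p_{X\mid Y}(\cdot\mid t(a))$) to bound the posterior Bayes risk by $\vE_{X,A}[\ell(X,A)]$ exactly as in the paper's Appendix B. Your explicit treatment of $\textsf{V}_{\mathcal{L}}^{\ell}(0)=0$ via non-negativity plus the independence property matches what the paper implicitly relies on (cf. its remark that independence is only needed for that boundary case).
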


\begin{proof}
See Appendix \ref{proof:main_result_for_information_leakage}.
\end{proof}

\begin{remark}
Stratonovich call $\textsf{V}_{I}^{\ell}(R)$ 
as \textit{Value of Shannon's Information} in \cite[Chapter. 9.3]{belavkin2020theory}. 
Thus we call 
$\textsf{V}_{I_{\alpha}^{\text{A}}}^{\ell}(R)$ 
(resp. $\textsf{V}_{I_{\alpha}^{\text{S}}}^{\ell}(R), \textsf{V}_{I_{\alpha}^{\text{C}}}^{\ell}(R), \textsf{V}_{I_f}^{\ell}(R)$) 
and $\textsf{V}_{\mathcal{L}_{\alpha}}^{\ell}(R)$ (resp. $\textsf{V}_{\mathcal{L}_{\alpha}^{\text{max}}}^{\ell}(R), \textsf{V}_{\mathcal{L}_{f}}^{\ell}(R)$)
as \textit{Value of Arimoto's (resp. Sibson's, Csisz\'ar's, $f$-) Information} and \textit{Value of $\alpha$- (resp. maximal $\alpha$-, $f$-) leakage}. 
\end{remark}

Let the alphabet $\mathcal{X}$ be $\mathcal{X} := \left\{1, 2, \dots, m \right\}$ and 
$\mathcal{P}(X)$ be a probability simplex in $\br^{m}$. 
In Storatonovich's original proof of the achievability, he showed the equality condition as 
$\mathcal{Y} = \mathcal{P}(X)$ and $Y = (p_{X\mid A}(1 \mid A), p_{X\mid A}(2 \mid A), \dots, p_{X\mid A}(m \mid A)) \in \mathcal{P}(X)$. 
In \cite{raginsky_VoI}, Raginsky gave much shorter proof with $\mathcal{Y} = \mathcal{A}$ and $Y=A$. 
Note that both equality conditions are special cases of the Theorem \ref{thm:main_result_for_information_leakage}, i.e., 
following holds.
\begin{prop} \label{prop:sufficient_statistic}
$t(A) = A$ is a sufficient statistic of $A$ for $X$. 
Moreover, if a family of distributions $\{p_{A\mid X}(\cdot \mid x)\}_{x\in \mathcal{X}}$ have the same support, then 
$t(A) = (p_{X\mid A}(1 | A), p_{X\mid A}(2 | A), \dots, p_{X\mid A}(m \mid A))$ is also sufficient for $X$.
\end{prop}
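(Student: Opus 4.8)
The statement to prove is Proposition~\ref{prop:sufficient_statistic}: that $t(A)=A$ is a sufficient statistic of $A$ for $X$, and that if $\{p_{A\mid X}(\cdot\mid x)\}_{x\in\mathcal{X}}$ share a common support, then $t(A)=(p_{X\mid A}(1\mid A),\dots,p_{X\mid A}(m\mid A))$ is also sufficient for $X$.

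\textbf{Plan.} The approach is to verify directly the defining property of a sufficient statistic: a statistic $t(A)$ of $A$ for $X$ is sufficient iff $X - t(A) - A$ forms a Markov chain, equivalently $p_{X\mid A}(x\mid a)$ depends on $a$ only through $t(a)$. First I would dispose of the trivial case $t(A)=A$: the identity map obviously makes $p_{X\mid A}(x\mid a)$ a function of $t(a)=a$, so $X - t(A) - A$ holds vacuously and sufficiency is immediate. The substance is the second claim. Here I would argue that under the common-support assumption, the map $a\mapsto t(a):=\bigl(p_{X\mid A}(1\mid a),\dots,p_{X\mid A}(m\mid a)\bigr)\in\mathcal{P}(X)$ is well defined for every $a$ with $p_A(a)>0$, and that by construction $p_{X\mid A}(x\mid a)$ equals the $x$-th coordinate of $t(a)$; hence $p_{X\mid A}(x\mid a)$ is a (measurable) function of $t(a)$ alone. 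This is exactly the statement that $X - t(A) - A$ is a Markov chain, so $t(A)$ is sufficient.

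\textbf{Key steps, in order.} (1) Recall/state the equivalence: $t(A)$ is sufficient for $X$ iff $p_{X\mid A}(\cdot\mid a)$ depends on $a$ only through $t(a)$ (i.e.\ $X-t(A)-A$). (2) For $t(A)=A$: the dependence is trivially through $t(a)=a$; conclude sufficiency. (3) For the coordinate map: note the common-support hypothesis guarantees $p_{X\mid A}(x\mid a)$ is well defined and the vector $t(a)$ lies in $\mathcal{P}(X)$; then observe $p_{X\mid A}(x\mid a) = \bigl(t(a)\bigr)_x$, which manifestly factors through $t(a)$. (4) Conclude $X-t(A)-A$, hence $t(A)$ is sufficient, and remark that this recovers Stratonovich's $\mathcal{Y}=\mathcal{P}(X)$ construction as the instance $\mathcal{Y}=t(\mathcal{A})$ of Theorem~\ref{thm:main_result_for_information_leakage}, while Raginsky's $\mathcal{Y}=\mathcal{A}$ is the instance $t(A)=A$.

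\textbf{Main obstacle.} This is a short verification rather than a deep argument, so the only real care needed is bookkeeping about supports: without the common-support hypothesis, the coordinate vector $t(a)$ still exists but the resulting statistic may collapse values of $a$ on which some family members vanish, and one should check the factorization $p_{X\mid A}(x\mid a)=(t(a))_x$ still literally holds (it does, on the support of $p_A$) — the hypothesis is there to make $t(\mathcal{A})\subseteq\mathcal{P}(X)$ a clean statement and to match Stratonovich's setup, not because the Markov property fails otherwise. I would also make explicit that ``sufficient statistic of $A$ for $X$'' here means sufficiency in the Bayesian sense with $X$ playing the role of the parameter, so that the relevant reduction is of the observation $A$ (not of a sample), which is why the Markov-chain characterization $X-t(A)-A$ is the right criterion to check.
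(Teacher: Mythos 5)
Your proposal is correct, but it takes a genuinely different route from the paper's. The paper proves the second claim in the classical (family-based) sense of sufficiency: it cites Lehmann--Casella to get that, under the common-support hypothesis, the likelihood-ratio vector $s(A)=\bigl(p_{A\mid X}(A\mid 2)/p_{A\mid X}(A\mid 1),\dots,p_{A\mid X}(A\mid m)/p_{A\mid X}(A\mid 1)\bigr)$ is minimal sufficient, proves an auxiliary lemma that any statistic of which a sufficient statistic is a function is itself sufficient (via the factorization theorem or the DPI), rewrites $s(A)$ by Bayes' rule as a function of the posterior-ratio vector $e(A)$ to conclude $e(A)$ is sufficient, and finally observes that $e(A)$ is a function of $t(A)$. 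You instead verify the Markov-chain characterization directly: since $p_{X\mid A}(\cdot\mid a)$ is, by construction, exactly the vector $t(a)$, the posterior factors through $t(A)$ and $X - t(A) - A$ holds; this is precisely the property invoked in the footnote of the achievability proof of Theorem~\ref{thm:main_result_for_information_leakage}, so your argument is shorter, self-contained, and sufficient for the application, and it correctly shows the common-support hypothesis is not needed for the Markov-chain property (it is needed for the likelihood-ratio lemma the paper uses). The trade-off lies in the notion of sufficiency: your criterion is sufficiency relative to the fixed joint $p_X\times p_{A\mid X}$, and it coincides with the classical prior-free notion only when $p_X$ has full support (values of $x$ outside $\mathrm{supp}(p_X)$ are invisible to the posterior), whereas the paper's longer route delivers classical sufficiency and, en route, minimality of $e(A)$ --- though the paper's Bayes-rule step also divides by $p_X(x)$, so the positivity caveat is implicitly shared. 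You flag the choice of definition explicitly, which is appropriate; making the full-support remark explicit would close the only remaining gap between your statement and the paper's.
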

\begin{proof}
See Appendix \ref{proof:sufficient_statistic}.
\end{proof}

\begin{remark}
Even though $\textsf{mmse}$-leakage $\mathcal{L}_{\textsf{mmse}}(X\to Y)$ and Arimoto's MI of order $\alpha=\infty$, i.e.,  $I_{\infty}^{\text{A}}(X; Y)$ does not have the independence property $3)$, 
almost the same result holds for $\textsf{V}_{\mathcal{L}_{\textsf{mmse}}}^{\ell}(R)$ and $\textsf{V}_{I_{\infty}^{\text{A}}}^{\ell}(R)$ 
since the only part that we use the independence property is to prove $\textsf{V}_{\mathcal{L}}^{\ell}(0) = 0$. 
Note that $\textsf{V}_{\mathcal{L}_{\textsf{mmse}}}^{\ell}(0)\geq 0$ and $\textsf{V}_{I_{\infty}^{\text{A}}}^{\ell}(0)\geq 0$ in general.
\end{remark}

\subsection{Basic properties of the Fundamental Limit}\label{ssec:basic_property_gen_VoI}
The following basic properties hold for the fundamental limit $\textsf{V}_{\mathcal{L}}^{\ell}(R)$.
\begin{prop} \label{prop:basic_property_gen_VoI}
\ 
\begin{enumerate}
\item $\textsf{V}_{\mathcal{L}}^{\ell}(R)$ is non-decreasing in $R$.
\item $\textsf{V}_{\mathcal{L}}^{\ell}(R)$ is concave (resp. quasi-concave) 
if $\mathcal{L}(X\to A)$ is convex (resp. quasi-convex) in $p_{A\mid X}$. 
\item Let $\mathcal{L}_{1}(X\to Y), \mathcal{L}_{2}(X\to Y)$ be information leakage measures. 
If there exists a constant $c>0$ such that $\mathcal{L}_{1}(X\to Y) \leq c\mathcal{L}_{2}(X\to Y)$, then 
\begin{align}
\textsf{V}_{\mathcal{L}^{(2)}}^{\ell}(R) \leq \textsf{V}_{\mathcal{L}^{(1)}}^{\ell}(cR), \\ 
\textsf{V}_{\mathcal{L}^{(2)}}^{\ell}(R/c) \leq \textsf{V}_{\mathcal{L}^{(1)}}^{\ell}(R).
\end{align}
\end{enumerate}
\end{prop}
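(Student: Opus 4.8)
The plan is to prove the three items in order, each reducing to an elementary monotonicity or convexity argument about the feasible set defined by the leakage constraint.

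\textbf{Item 1.} I would argue directly from the definition \eqref{eq:gen_VoI_fundamental_for_classic_loss}. The first term $\inf_{a}\vE_{X}[\ell(X,a)]$ does not depend on $R$. For the second term, note that if $R_{1}\leq R_{2}$ then the feasible set $\{p_{A\mid X}\colon \mathcal{L}(X\to A)\leq R_{1}\}$ is contained in $\{p_{A\mid X}\colon \mathcal{L}(X\to A)\leq R_{2}\}$, so the infimum over the larger set is no larger: $\inf_{\mathcal{L}(X\to A)\leq R_{2}}\vE_{X,A}[\ell(X,A)]\leq \inf_{\mathcal{L}(X\to A)\leq R_{1}}\vE_{X,A}[\ell(X,A)]$. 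Subtracting from the constant first term reverses the inequality, giving $\textsf{V}_{\mathcal{L}}^{\ell}(R_{1})\leq \textsf{V}_{\mathcal{L}}^{\ell}(R_{2})$.

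\textbf{Item 2.} Write $g(R) := \inf_{p_{A\mid X}\colon \mathcal{L}(X\to A)\leq R}\vE_{X,A}[\ell(X,A)]$, so that $\textsf{V}_{\mathcal{L}}^{\ell}(R) = \mathrm{const} - g(R)$; it suffices to show $g$ is convex (resp. quasi-convex) when $\mathcal{L}(X\to A)$ is convex (resp. quasi-convex) in $p_{A\mid X}$, since negating and adding a constant turns convexity into concavity and quasi-convexity into quasi-concavity. For convexity: given $R_{1},R_{2}$ and $\lambda\in[0,1]$, pick near-optimal mechanisms $p^{(1)},p^{(2)}$ for $R_{1},R_{2}$ and form the mixture $p^{\lambda} := \lambda p^{(1)} + (1-\lambda)p^{(2)}$. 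Convexity of $\mathcal{L}$ in $p_{A\mid X}$ gives $\mathcal{L}(X\to A)|_{p^{\lambda}}\leq \lambda R_{1} + (1-\lambda)R_{2}$, so $p^{\lambda}$ is feasible for $\lambda R_{1}+(1-\lambda)R_{2}$; and linearity of $p_{A\mid X}\mapsto \vE_{X,A}[\ell(X,A)]$ (the objective is linear in $p_{A\mid X}$ for fixed $p_{X}$ and classical $\ell$) gives $g(\lambda R_{1}+(1-\lambda)R_{2})\leq \vE[\ell]|_{p^{\lambda}} = \lambda\vE[\ell]|_{p^{(1)}} + (1-\lambda)\vE[\ell]|_{p^{(2)}}$, which is within $\varepsilon$ of $\lambda g(R_{1}) + (1-\lambda)g(R_{2})$; let $\varepsilon\to 0$. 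The quasi-convex case is the same but using that $\mathcal{L}(X\to A)|_{p^{\lambda}}\leq \max\{R_{1},R_{2}\}$, hence $p^{\lambda}$ is feasible for $\max\{R_{1},R_{2}\}$ and $g(\max\{R_{1},R_{2}\})\leq \max\{g(R_{1}),g(R_{2})\}+\varepsilon$.

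\textbf{Item 3.} Suppose $\mathcal{L}_{1}(X\to Y)\leq c\,\mathcal{L}_{2}(X\to Y)$ for all channels. Then any $p_{A\mid X}$ with $\mathcal{L}_{2}(X\to A)\leq R$ satisfies $\mathcal{L}_{1}(X\to A)\leq cR$, so $\{p_{A\mid X}\colon \mathcal{L}_{2}(X\to A)\leq R\}\subseteq \{p_{A\mid X}\colon \mathcal{L}_{1}(X\to A)\leq cR\}$; taking infima of $\vE_{X,A}[\ell(X,A)]$ over these nested sets (the smaller set gives the larger infimum) and subtracting from the common constant first term yields $\textsf{V}_{\mathcal{L}^{(2)}}^{\ell}(R)\leq \textsf{V}_{\mathcal{L}^{(1)}}^{\ell}(cR)$. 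The second inequality follows by substituting $R\mapsto R/c$ in the first.

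The only place that requires care — and the step I would flag as the main obstacle — is the interchange-of-infima / mixing argument in Item 2: one must be careful that the objective $\vE_{X,A}[\ell(X,A)]$ is genuinely \emph{linear} (not merely convex) in $p_{A\mid X}$, which holds precisely because $\ell$ is the classical loss $\ell(x,a)$ and $p_{X}$ is fixed, and one should handle the infima via $\varepsilon$-optimal mechanisms rather than assuming minimizers exist (or invoke finiteness of the alphabets, stated in Section \ref{sec:preliminary}, to get genuine minimizers). Items 1 and 3 are pure set-inclusion monotonicity and essentially immediate.
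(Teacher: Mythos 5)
Your proposal is correct and follows essentially the same route as the paper: item 1 by feasible-set monotonicity (the paper dismisses it as trivial), item 2 by mixing (near-)optimal mechanisms and exploiting that $p_{A\mid X}\mapsto \vE_{X,A}\left[\ell(X,A)\right]$ is linear for a classical loss and fixed $p_{X}$ --- exactly the paper's chain ending in $U(\lambda R_{1}+(1-\lambda)R_{2})\leq \lambda U(R_{1})+(1-\lambda)U(R_{2})$ --- and item 3 by transferring feasibility via $\mathcal{L}_{1}\leq c\,\mathcal{L}_{2}$; the only cosmetic differences are that the paper uses exact minimizers (legitimate here since alphabets are finite, so your $\varepsilon$-optimal bookkeeping is an unnecessary but harmless precaution) and proves item 3 at the level of $\textsf{V}_{\mathcal{L}}^{\ell}(R;\mathcal{Y})$ with the optimal mechanism $p^{*,2}_{Y\mid X}$ before passing to the fundamental limit, whereas you argue directly on the definition of $\textsf{V}_{\mathcal{L}}^{\ell}(R)$. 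One small repair in item 2: your quasi-convex sentence only shows the mixture is feasible at level $\max\{R_{1},R_{2}\}$, which gives $g(\max\{R_{1},R_{2}\})\leq \max\{g(R_{1}),g(R_{2})\}$ rather than quasi-convexity of $g$ at the point $\lambda R_{1}+(1-\lambda)R_{2}$; the needed inequality follows instead from monotonicity alone, since $g$ is non-increasing and $\lambda R_{1}+(1-\lambda)R_{2}\geq \min\{R_{1},R_{2}\}$ yields $g(\lambda R_{1}+(1-\lambda)R_{2})\leq g(\min\{R_{1},R_{2}\})=\max\{g(R_{1}),g(R_{2})\}$ (the paper is equally terse here, merely asserting the quasi-convex case is ``similar'').
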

\begin{proof}
See Appendix \ref{proof:basic_property_gen_VoI}. 
\end{proof}

\begin{cor} From the property $2)$ above, following holds.
\begin{itemize}
\item $\textsf{V}^{\ell}_{I}(R)$ is concave 
since $I(X; A)$ is convex in $p_{A\mid X}$ for fixed $p_{X}$ (see, e.g., \cite[Thm 2.7.4]{Cover:2006:EIT:1146355})
\item $\textsf{V}^{\ell}_{\mathcal{L}_{\alpha}}(R) = \textsf{V}^{\ell}_{I_{\alpha}^{\text{A}}}(R)$ is quasi-concave 
since $\mathcal{L}_{\alpha}(X\to A) = I_{\alpha}^{\text{A}}(X; A)$ is quasi-convex in $p_{A\mid X}$ for fixed $p_{X}$  (see \cite[Footnote 3]{8804205})
\item For $\alpha > 0$, $\textsf{V}^{\ell}_{I_{\alpha}^{\text{S}}}(R)$ is quasi-concave 
since $I_{\alpha}^{\text{S}}(X; A)$ is quasi-convex in $p_{A\mid X}$ for fixed $p_{X}$. 
For $0 < \alpha\leq 1$, $\textsf{V}^{\ell}_{I_{\alpha}^{\text{S}}}(R)$ is concave 
since $I_{\alpha}^{\text{S}}(X; A)$ is convex in $p_{A\mid X}$ for fixed $p_{X}$ (see \cite[Thm 10]{7282554})
\item For $0 < \alpha\leq 1$, $\textsf{V}^{\ell}_{I_{\alpha}^{\text{C}}}(R)$ is concave 
since $I_{\alpha}^{\text{C}}(X; A)$ is convex in $p_{A\mid X}$ for fixed $p_{X}$ (see \cite[Thm 9 (c)]{e23020199})
\item $\textsf{V}^{\ell}_{I_{f}}(R)$ and $\textsf{V}^{\ell}_{\mathcal{L}_{f}}(R)$ are concave 
since $I_{f}(X; A)$ and $\mathcal{L}_{f}(X\to A)$ are both convex in $p_{A\mid X}$\footnote{From the convexity of $f$-divergence \cite[Lem 4.1]{Csiszar:2004:ITS:1166379.1166380}, 
one can derive the convexity of $I_{f}(X; A)$ and $\mathcal{L}_{f}(X\to A)$ in $p_{A\mid X}$. } 
for fixed $p_{X}$ 
\item For $\alpha > 0$, $\textsf{V}^{\ell}_{\mathcal{L}_{\alpha}^{\text{max}}}(R)$ is quasi-concave 
since $\mathcal{L}_{\alpha}^{\text{max}}(X\to A)$ is quasi-convex in $p_{A\mid X}$ for fixed support of $p_{X}$ (see \cite[Thm 3]{8804205}). 
For $0< \alpha \leq 1$, $\textsf{V}^{\ell}_{\mathcal{L}_{\alpha}^{\text{max}}}(R)$ is concave 
since $\mathcal{L}_{\alpha}^{\text{max}}(X\to A)$ is convex in $p_{A\mid X}$ for fixed support of $p_{X}$
\footnote{Convexity of $\mathcal{L}_{\alpha}^{\text{max}}(X\to A)$ in $p_{A\mid X}$ follows from \cite[Thm 2]{8804205} and \cite[Thm 10]{7282554}.}
\end{itemize}
\end{cor}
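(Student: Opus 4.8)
The plan is to reduce every bullet of the corollary to a single application of Proposition~\ref{prop:basic_property_gen_VoI}(2). Recall that $\textsf{V}_{\mathcal{L}}^{\ell}(R)$ is the fundamental limit of \eqref{eq:gen_VoI_fundamental_for_classic_loss}, whose only $R$-dependent piece is the parametric minimization $\inf_{p_{A\mid X}\colon \mathcal{L}(X\to A)\le R}\vE_{X,A}[\ell(X,A)]$ of an objective that is affine in $p_{A\mid X}$ (the input $p_X$ being fixed). Proposition~\ref{prop:basic_property_gen_VoI}(2) packages exactly the fact that, once the feasible region $\{p_{A\mid X}\colon \mathcal{L}(X\to A)\le R\}$ is convex (resp.\ behaves well under quasi-convexity), this parametric value is convex (resp.\ quasi-convex) in $R$, hence $\textsf{V}_{\mathcal{L}}^{\ell}(R)$ is concave (resp.\ quasi-concave). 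So the whole proof is bookkeeping: for each measure in Table~\ref{tab:information_leakage}, cite or briefly re-derive its convexity/quasi-convexity in $p_{A\mid X}$ for fixed $p_X$, then invoke the proposition.

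First I would dispatch the cases that are classical or already recorded. For mutual information, $I(X;A)$ is convex in $p_{A\mid X}$ for fixed $p_X$ by \cite[Thm~2.7.4]{Cover:2006:EIT:1146355}, giving concavity of $\textsf{V}_{I}^{\ell}(R)$. For $\alpha$-leakage, I would use $\mathcal{L}_{\alpha}(X\to A)=I_{\alpha}^{\text{A}}(X;A)$, so $\textsf{V}_{\mathcal{L}_{\alpha}}^{\ell}=\textsf{V}_{I_{\alpha}^{\text{A}}}^{\ell}$, and the quasi-convexity of $I_{\alpha}^{\text{A}}(X;A)$ in $p_{A\mid X}$ from \cite[Footnote~3]{8804205} yields quasi-concavity. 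For Sibson's MI, \cite[Thm~10]{7282554} gives quasi-convexity of $I_{\alpha}^{\text{S}}(X;A)$ in $p_{A\mid X}$ for $\alpha>0$ and genuine convexity for $0<\alpha\le 1$; for Csisz\'ar's MI, \cite[Thm~9(c)]{e23020199} gives convexity of $I_{\alpha}^{\text{C}}(X;A)$ for $0<\alpha\le 1$. Each statement, fed into Proposition~\ref{prop:basic_property_gen_VoI}(2), produces the corresponding line of the corollary.

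The remaining cases, the $f$-type measures and the maximal $\alpha$-leakage, each need one extra sentence. For $I_{f}(X;A)=D_{f}(p_{X,A}\|p_X p_A)$ and $\mathcal{L}_{f}(X\to A)=\min_{q_A}D_{f}(p_{X,A}\|p_X q_A)$, I would start from the joint convexity of $(p,q)\mapsto D_{f}(p\|q)$ in \cite[Lem~4.1]{Csiszar:2004:ITS:1166379.1166380}: the map $p_{A\mid X}\mapsto(p_{X,A},p_X p_A)$ is affine, so $I_{f}$ is convex in $p_{A\mid X}$; for $\mathcal{L}_{f}$ one writes it as a partial minimization, over the (convex) set of auxiliary distributions $q_A$, of a function jointly convex in $(p_{A\mid X},q_A)$, and minimizing out $q_A$ preserves convexity in $p_{A\mid X}$. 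For maximal $\alpha$-leakage, \cite[Thm~3]{8804205} gives quasi-convexity of $\mathcal{L}_{\alpha}^{\text{max}}(X\to A)$ for $\alpha>0$, and combining \cite[Thm~2]{8804205} with \cite[Thm~10]{7282554} gives convexity for $0<\alpha\le 1$; the only wrinkle is that these are stated for fixed \emph{support} of $p_X$ rather than fixed $p_X$, but since $p_X$ is held fixed throughout Definition~\ref{def:VoI_for_information_leakage}, its support is a fortiori fixed and Proposition~\ref{prop:basic_property_gen_VoI}(2) applies verbatim. I expect the only mildly delicate point to be this partial-minimization step for $\mathcal{L}_{f}$ (and, if one wishes to re-derive rather than cite, for Sibson's and Csisz\'ar's MI), where one must be careful that the auxiliary variable ranges over a convex set and that the pre-minimization objective is jointly convex, so that the minimized value inherits convexity in $p_{A\mid X}$; everything else is a direct citation plus one invocation of Proposition~\ref{prop:basic_property_gen_VoI}(2).
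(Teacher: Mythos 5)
Your proposal is correct and follows essentially the same route as the paper: the corollary is proved there by a direct invocation of Proposition~\ref{prop:basic_property_gen_VoI}(2) together with the cited convexity/quasi-convexity results for each leakage measure, with the $f$-type and maximal $\alpha$-leakage cases handled exactly as in the paper's footnotes (joint convexity of the $f$-divergence plus affinity/partial minimization, and \cite[Thm 2]{8804205} with \cite[Thm 10]{7282554}, respectively). Your added remarks on the partial-minimization step and on fixed $p_X$ implying fixed support are sound elaborations of what the paper leaves implicit.
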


Figure \ref{fig:VoI_alpha} shows a graph of the value of Shannon's information.
\begin{figure}[htbp]
\centering
\includegraphics[width=1.5  in,keepaspectratio,clip]{./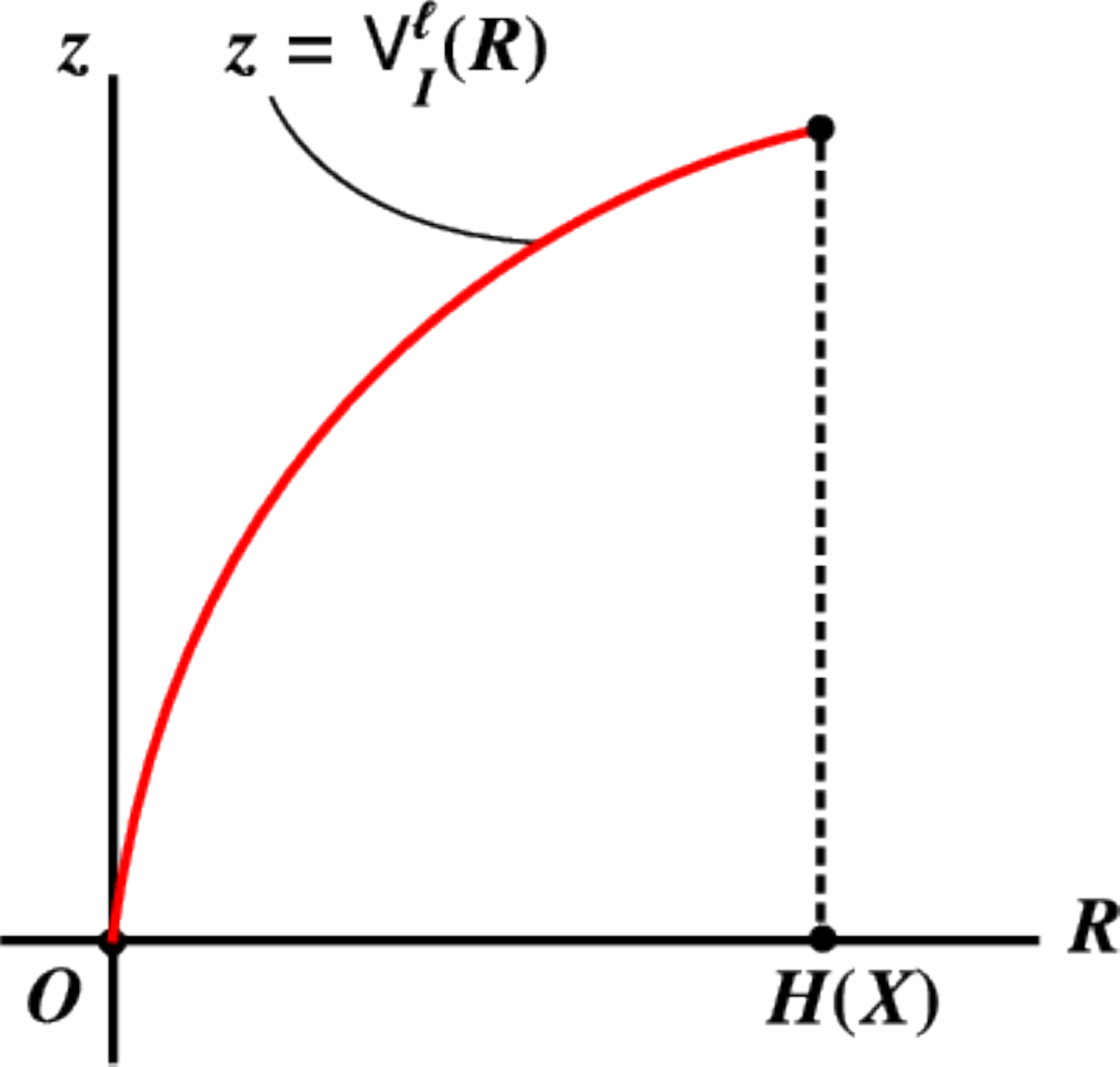}
\caption{Value of Shannon's information}
\label{fig:VoI_alpha}
\end{figure}

\subsection{Extension: logarithmic value of information}
Instead of the average gain in Definition \ref{def:ave_gain}, 
we can consider \textit{logarithmic gain} to capture utility.

\begin{definition} 
The \textit{logarithmic gain} of using $Y$ on $X$ for making an action $A$ when a loss function is $\ell(x, a)$ 
and the \textit{logarithmic value of information} are defined as follows:
\begin{align}
\textsf{Lgain}^{\ell}(X; Y) 
&:= \log \frac{\inf_{\delta^{*}}r(\delta^{*}, p_{Y})}{\inf_{\delta^{*}}r(\delta^{*}, p_{Y\mid X})} \\ 
&= \log \frac{\inf_{a}\vE_{X}\left[\ell(X, a)\right]}{\vE_{Y}\left[\inf_{a}\vE_{X}\left[\ell(X, a)\mid Y\right]\right]}, \\ 
{\textsf{LV}}_{\mathcal{L}}^{\ell}(R; \mathcal{Y}) 
&:= \sup_{\substack{p_{Y\mid X}\colon \\ \mathcal{L}(X\to Y)\leq R}} 
\textsf{Lgain}^{\ell}(X; Y). 
\end{align}
\end{definition}
\begin{eg} \label{eg:sq_loss_log} 
Let $A = \hat{X}$ and $\ell_{\text{sq}}(x, \hat{x}) = (x-\hat{x})^{2}$. Then 
\begin{align}
\textsf{Lgain}^{\ell_{\text{sq}}}(X; Y) &= \log \frac{\vV(X)}{\vE_{Y}\left[\vV(X\mid Y)\right]} =: \mathcal{L}_{\textsf{MS}}(X\to Y).
\end{align}
From Proposition \ref{prop:property_mmse_leakage}, it follows that 
$\textsf{Lgain}^{\ell_{\text{sq}}}(X; Y)=\mathcal{L}_{\textsf{MS}}(X\to Y)$ 
has properties $1), 2)$ and does not have the independence property $3)$.
\end{eg}

\begin{remark}
It is worth noting that Issa \textit{et al.} introduce \textit{maximal versions} of the logarithmic gain in \cite{8943950}. 
For example, they inrotoduced the \textit{variance leakage} $\mathcal{L}^{v}(X\to Y)$ as follows:
\begin{align}
\mathcal{L}^{v}(X\to Y) &:= \sup_{U-X-Y} \mathcal{L}_{\textsf{MS}}(U\to Y) \\ 
&= - \log (1-\rho_{m}(X; Y)), 
\end{align}
(see \cite[Def 10 and Lem 16]{8943950}) where 
\begin{align}
\rho_{m}(X; Y) &:= \sup_{\substack{f, g: \\ \vE[f(X)]=\vE[g(X)]=0, \\ \vE[f(X)^{2}]=\vE[g(X)^{2}]=1}} \vE[f(X)g(Y)]
\end{align}
is the \textit{maximal correlation}. Note that the variance leakage $\mathcal{L}^{v}(X\to Y)$ have all properties $1), 2)$ and $3)$ in Definition \ref{def:info_leakage} 
(see \cite[Prop 5.2]{asoodeh_phdthesis}). 
They also introduced a maximal version of all the logarithmic gain, called \textit{maximal cost leakage} $\mathcal{L}^{c}(X\to Y)$, as follows:
\begin{align}
\mathcal{L}^{c}(X\to Y) &:= \sup_{\substack{U-X-Y \\ \hat{U}, \ell\colon \mathcal{X}\times \hat{\mathcal{X}}\to \br_{+}}} \textsf{Lgain}^{\ell}(U; Y) \\ 
&= -\log \sum_{y} \min_{x\in \text{supp}(p_{X})} p_{Y\mid X}(y\mid x)
\end{align}
(see \cite[Def 11 and Thm 15]{8943950}). Note also that the maximal cost gain $\mathcal{L}^{c}(X\to Y)$ have have all properties $1), 2)$ and $3)$ in Definition \ref{def:info_leakage} 
(see \cite[Cor 5]{8943950}). 
In addition to these \textit{loss (cost) based information leakage} measures, they also introduced several \textit{utility}\footnote{Here we used the term `utility' in a statistical decision-theoretic sense. Note that Issa \textit{et al.} call `utility based information leakage' as `gain based information leakage'.} 
\textit{based information leakage} measures 
and showed relationships to the maximal information leakage $\mathcal{L}_{\text{MaxL}}(X\to Y)$. 
See \cite{8943950} for detail.
\end{remark}
For the logarithmic gain, a similar result as in Theorem \ref{thm:main_result_for_information_leakage} holds as follows.
\begin{cor}
For a classical loss $\ell(x,a)$, define a function as follows:
\begin{align}
{\textsf{LV}}_{\mathcal{L}}^{\ell}(R) 
&:= \log \inf_{a}\vE_{X}\left[\ell(X, a)\right] \notag \\ 
&- \inf_{\substack{p_{A\mid X}\colon \\ \mathcal{L}(X\to A)\leq R}} \log \vE_{X, A}\left[\ell(X, A)\right].
\end{align}
Then, following holds.
\begin{align}
\sup_{\mathcal{Y}} {\textsf{LV}}_{\mathcal{L}}^{\ell}(R; \mathcal{Y}) &= {\textsf{LV}}_{\mathcal{L}}^{\ell}(R).
\end{align}
\end{cor}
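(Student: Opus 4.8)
The plan is to deduce the identity from Theorem~\ref{thm:main_result_for_information_leakage} by peeling off the logarithm. Writing out the definitions, $\textsf{Lgain}^{\ell}(X;Y)=\log\inf_{a}\vE_{X}[\ell(X,a)]-\log\vE_{Y}[\inf_{a}\vE_{X}[\ell(X,a)\mid Y]]$, and the first term does not depend on $p_{Y\mid X}$; since $t\mapsto\log t$ is continuous and strictly increasing on $(0,\infty)$, it commutes with both $\inf$ and $\sup$, so
\[
{\textsf{LV}}_{\mathcal{L}}^{\ell}(R;\mathcal{Y})=\log\frac{\inf_{a}\vE_{X}[\ell(X,a)]}{\displaystyle\inf_{p_{Y\mid X}\colon\mathcal{L}(X\to Y)\leq R}\vE_{Y}\!\left[\inf_{a}\vE_{X}[\ell(X,a)\mid Y]\right]},
\]
and likewise ${\textsf{LV}}_{\mathcal{L}}^{\ell}(R)=\log\big(\inf_{a}\vE_{X}[\ell(X,a)]\big/\inf_{p_{A\mid X}\colon\mathcal{L}(X\to A)\leq R}\vE_{X,A}[\ell(X,A)]\big)$. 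I would note in passing that $\ell\geq 0$ together with nondegeneracy of the Bayes risks makes all arguments of $\log$ strictly positive, so these expressions are well defined.

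Taking $\sup_{\mathcal{Y}}$ and using once more that $\log$ commutes with the infimum over $\mathcal{Y}$, it then suffices to prove
\[
\inf_{\mathcal{Y}}\ \inf_{p_{Y\mid X}\colon\mathcal{L}(X\to Y)\leq R}\vE_{Y}\!\left[\inf_{a}\vE_{X}[\ell(X,a)\mid Y]\right]=\inf_{p_{A\mid X}\colon\mathcal{L}(X\to A)\leq R}\vE_{X,A}[\ell(X,A)].
\]
But this is exactly the content of Theorem~\ref{thm:main_result_for_information_leakage}: the additive constant $\inf_{a}\vE_{X}[\ell(X,a)]$ in \eqref{eq:main_result_for_information_leakage} is independent of $\mathcal{Y}$ and of the minimizations, so $\sup_{\mathcal{Y}}\textsf{V}_{\mathcal{L}}^{\ell}(R;\mathcal{Y})=\textsf{V}_{\mathcal{L}}^{\ell}(R)$ is equivalent to the displayed equality. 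Substituting back yields $\sup_{\mathcal{Y}}{\textsf{LV}}_{\mathcal{L}}^{\ell}(R;\mathcal{Y})={\textsf{LV}}_{\mathcal{L}}^{\ell}(R)$.

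If a self-contained proof is preferred, I would instead replay the two halves of the proof of Theorem~\ref{thm:main_result_for_information_leakage} verbatim, inserting a monotone $\log$. \textbf{Upper bound:} given $p_{Y\mid X}$ with $\mathcal{L}(X\to Y)\leq R$, put $A:=\delta^{\text{Bayes}}(Y)$; then $X-Y-A$ is Markov, so $\mathcal{L}(X\to A)\leq R$ by the DPI, $\vE_{Y}[\inf_{a}\vE_{X}[\ell(X,a)\mid Y]]=\vE_{X,A}[\ell(X,A)]$, and applying $\log$ gives $\textsf{Lgain}^{\ell}(X;Y)\leq{\textsf{LV}}_{\mathcal{L}}^{\ell}(R)$. \textbf{Achievability:} pick $p^{*}_{A\mid X}$ attaining the infimum defining ${\textsf{LV}}_{\mathcal{L}}^{\ell}(R)$ (by monotonicity of $\log$ this is the same minimizer as for $\vE_{X,A}[\ell(X,A)]$), let $t$ be a sufficient statistic of $A$ for $X$ (e.g.\ $t(A)=A$ by Proposition~\ref{prop:sufficient_statistic}), and take $\mathcal{Y}=t(\mathcal{A})$ with the induced channel $p^{*}_{Y\mid X}$; then $X-A-Y$ Markov gives $\mathcal{L}(X\to Y)\leq R$, sufficiency gives $\vE_{Y}[\inf_{a}\vE_{X}[\ell(X,a)\mid Y]]\leq\vE_{X,A}[\ell(X,A)]$, and this $\mathcal{Y}$ achieves ${\textsf{LV}}_{\mathcal{L}}^{\ell}(R)$.

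The only genuinely delicate point is the bookkeeping needed to pull $\log$ outside the infima and suprema: one must check that every quantity being logarithmized is strictly positive and finite, so the ratios and their logarithms make sense, and that continuity plus monotonicity of $\log$ legitimately lets it pass through $\inf$ and $\sup$. Once that is settled there is no new analytic content, and the claim reduces entirely to Theorem~\ref{thm:main_result_for_information_leakage}.
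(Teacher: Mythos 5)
Your proposal is correct and follows essentially the route the paper intends: the corollary is stated without a separate proof precisely because, as you show, the monotonicity and continuity of $\log$ let it pass through the infima and suprema, reducing the claim to the identity $\inf_{\mathcal{Y}}\inf_{p_{Y\mid X}\colon \mathcal{L}(X\to Y)\leq R}\vE_{Y}\left[\inf_{a}\vE_{X}\left[\ell(X,a)\mid Y\right]\right]=\inf_{p_{A\mid X}\colon \mathcal{L}(X\to A)\leq R}\vE_{X,A}\left[\ell(X,A)\right]$ already established in Theorem~\ref{thm:main_result_for_information_leakage}, and your alternative replay of the converse/achievability steps with the DPI and the sufficient statistic is the same argument as the paper's proof of that theorem. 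Your care about strict positivity of the logarithmized quantities is a reasonable (and the only) extra bookkeeping point.
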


\section{Application to Privacy-Utility Trade-off}\label{sec:PUT}

In this section, we provide an interpretation of 
the achievability condition in Theorem \ref{thm:main_result_for_information_leakage} 
in the PUT context. 
We assume three parties: data curator (Alice), a legitimate user (Bob), and an adversary (Eve). 
Alice has the original data $X$ and disclose perturbed data $Y$ through a privacy mechanism $p_{Y\mid X}$ to prevent information leakage to Eve. A privacy constraint is represented 
as $\mathcal{L}(X\to Y)\leq R$, where the information leakage measure $\mathcal{L}(X\to Y)$ is chosen arbitrarily by Alice. 
While Bob's purpose of using the published data $Y$ is represented as an action, a deterministic decision rule and a loss function, i.e., 
$A=\delta(Y)$ and $\ell(x, a)$, respectively. 
Suppose that \textit{Alice knows the Bob's purpose of using the published data $Y$ before disclosure}.
We also assume that Bob make his action with the optimal decision rule 
$\delta^{\text{Bayes}}$ under the loss functions $\ell(x, a)$. 

In the situation above, Theorem \ref{thm:main_result_for_information_leakage} states that   
in order to maximize utility measured by $\textsf{gain}^{\ell}(X; Y)$ 
under the privacy constraint $\mathcal{L}(X\to Y)\leq R$,  
Alice should take the following steps: 
\begin{enumerate}
\item Find the channel $p^{*}_{A\mid X}$ such that 
\begin{align}
p^{*}_{A\mid X} = \arginf_{p_{A\mid X}\colon \mathcal{L}(X\to A)\leq R}\vE_{X, A}\left[\ell(X, A)\right]. 
\end{align}
\item Generate a random variable $\tilde{A}$ drawn to $p^{*}_{A\mid X}$.
\item Finally, disclose $Y=t(\tilde{A})$, a sufficient statistic of $\tilde{A}$ for $X$, to public. 
\end{enumerate}

{
\begin{remark}
When Alice assumes Eve's purpose of using $Y$, say $\delta^{*}_{\mathrm{eve}}$ and $\ell_{\mathrm{eve}}(x, y, a, \delta^{*}_{\mathrm{eve}})$, 
she can chose a privacy constraint as an average gain for Eve, i.e., 
$\mathcal{L}(X\to Y) := \textsf{gain}^{\ell_{\mathrm{eve}}}(X; Y)$. 
Note that she can even adopt the privacy constraint as the \textit{maximal gain} 
$\textsf{Mgain} ^{\ell_{\mathrm{eve}}}(X; Y)$ defined as follows, which is 
the inferential gain for using $Y$ in the most favorable situation for Eve.
\begin{definition}[Maximal gain]
For a standard loss $\ell(x, y, a, \delta^{*})$, the maximal gain of using $Y$ on $X$ for making an action $A$ is defined as
\begin{align}
&\textsf{Mgain}^{\ell}(X; Y) :=  
\vE_{Y}\left[\inf_{\delta^{*}(y, \cdot)}\vE_{X}\left[L(X, \delta^{*}(Y,\cdot))\right]\right] \notag \\ 
&\quad - \min_{y}\inf_{\delta^{*}(y, \cdot)} \vE_{X}\left[L(X, \delta^{*}(Y, \cdot))\relmiddle{|} Y=y\right]. \label{eq:max_gain}
\end{align}
\end{definition}
\end{remark}
Note that it follows immediately from \cite[Prop 23]{alvim_2018} that the maximal gain satisfies DPI.}

\section{Conclusion}\label{sec:conclusion}
In this study, we generalized the Stratonovich's VoI to 
formulate a problem of decision-making under a general information leakage constraint and a general loss function.
We derived upper bound for the VoI and showed weaker achievability condition than ever for a classical loss function. 
We presented an interpretation of these results in the PUT context and some extended results.
Future work includes deriving calculation algorithms for the upper bound.

\appendices

\section{Proof of Proposition \ref{prop:ub_for_information_leakage}} \label{proof:ub_for_information_leakage}
\begin{proof}
Define $\tilde{U}_{\mathcal{L}}^{\ell}(R; \mathcal{Y})$ and $\bar{U}_{\mathcal{L}}^{\ell}(R)$ as the second terms of the RHS 
in \eqref{eq:gen_VoI_for_classic_loss} and \eqref{eq:gen_VoI_fundamental_for_classic_loss}, respectively, i.e., 
\begin{align}
\tilde{U}_{\mathcal{L}}^{\ell}(R; \mathcal{Y}) &:= \inf_{\substack{p_{Y\mid X}\colon \\ {\mathcal{L}(X\to Y)}\leq R}}\vE_{Y}\left[\inf_{\delta^{*}(y, \cdot)} \vE_{X}\left[L(X, \delta^{*}(Y, \cdot))\relmiddle{|}Y\right]\right], \\ 
\bar{U}_{\mathcal{L}}^{\ell}(R; \mathcal{Y}) &:= \inf_{\substack{p_{Y\mid X}, \delta^{*}\colon \\ 
{\mathcal{L}(X\to A)} \leq R}} \vE_{X, Y}\left[L(X, \delta^{*}(Y, \cdot))\right].   
\end{align}
It suffices to show that $\tilde{U}_{\mathcal{L}}^{\ell}(R; \mathcal{Y}) \geq \bar{U}_{\mathcal{L}}^{\ell}(R; \mathcal{Y})$ for arbitrary alphabet $\mathcal{Y}$.
Define the privacy mechanism $\tilde{p}_{Y\mid X}$ and the optimal randomized decision rule $\tilde{\delta}^{*, \text{Bayes}}=\tilde{p}_{A\mid Y}$ as 
\begin{align}
\tilde{p}_{Y\mid X} &:= \arginf_{\substack{p_{Y\mid X}\colon \\ 
{\mathcal{L}(X\to Y)}\leq R}} \tilde{U}_{\mathcal{L}}^{\ell}(R; \mathcal{Y}), \label{eq:tilde_p_Y_given_X} \\
\tilde{\delta}^{*, \text{Bayes}}(y, a) &= \tilde{p}_{A\mid Y}(a\mid y) \\
&:= \arginf_{\delta^{*}(y, \cdot)} \sum_{x, a} \ell(x, y, a,\delta^{*}) \delta^{*}(y, a)\tilde{p}_{X\mid Y}(x\mid y),
\end{align}
where $\tilde{p}_{X\mid Y}(x\mid y) := \frac{p_{X}(x)\tilde{p}_{Y\mid X}(y\mid x)}{p_{Y}(y)}$. 
Since $X - Y - A$ forms a Markov chain for the distributions  $\tilde{p}_{Y\mid X}$ and $\tilde{\delta}^{*, \text{Bayes}}=\tilde{p}_{A\mid Y}$,  
\begin{align}
{\mathcal{L}(X\to A) \leq \mathcal{L}(X\to Y)} \leq R\label{eq:achievable}
\end{align}
holds from DPI \eqref{eq:DPI} and \eqref{eq:tilde_p_Y_given_X}.
Then from \eqref{eq:achievable}, 
\begin{align}
&\bar{U}_{\mathcal{L}}^{\ell}(R; \mathcal{Y}) = \inf_{\substack{p_{Y\mid X}, \delta^{*}\colon \\ 
{\mathcal{L}(X\to A)} \leq R}} \vE_{X, Y}\left[L(X, \delta^{*}(Y, \cdot))\right]  \\ 
&\leq \sum_{x, y, a}p_{X}(x)\tilde{p}_{Y\mid X}(y\mid x)\tilde{\delta}^{*, \text{Bayes}}(y, a) \ell(x, y, a, \tilde{\delta}^{*, \text{Bayes}}) \\
&= \tilde{U}_{\mathcal{L}}^{\ell}(R; \mathcal{Y}).
\end{align}
\end{proof}

\section{Proof of Theorem \ref{thm:main_result_for_information_leakage}} \label{proof:main_result_for_information_leakage}
Based on [2, Chapter. 9.7] and a refined proof in [19], we prove Theorem 1 as follows.
\begin{proof} 
Define ${U}_{\mathcal{L}}^{\ell}(R; \mathcal{Y})$ and ${U}_{\mathcal{L}}^{\ell}(R)$ as the second terms of RHS 
in \eqref{eq:gen_VoI_for_classic_loss} and \eqref{eq:gen_VoI_fundamental_for_classic_loss}, respectively, i.e., 
\begin{align}
{U}_{\mathcal{L}}^{\ell}(R; \mathcal{Y}) &:= \inf_{\substack{p_{Y\mid X}\colon \\ 
{\mathcal{L}(X\to Y)}\leq R}} \vE_{Y}\left[\inf_{a} \vE_{X}\left[\ell(X, a) \mid Y\right] \right], \\ 
{U}_{\mathcal{L}}^{\ell}(R) &:= \inf_{\substack{p_{A\mid X}\colon \\ 
{\mathcal{L}(X\to A)} \leq R}} \vE_{X,A}\left[\ell(X, A)\right].
\end{align}

\noindent
(Converse part): 
It suffices to show that $U_{\mathcal{L}}^{\ell}(R; \mathcal{Y})\geq U_{\mathcal{L}}^{\ell}(R)$ 
to prove $\textsf{V}_{\mathcal{L}}^{\ell}(R; \mathcal{Y})\leq \textsf{V}_{\mathcal{L}}^{\ell}(R)$ for arbitrary $\mathcal{Y}$.
This can be proved in a similar way to that in the proof of Proposition \ref{prop:ub_for_information_leakage} 
(see \cite[Appendix D]{9611484}). 

\noindent
(Achievable part): 
Let $\mathcal{Y} := t(\mathcal{A})$. 
It suffices to show that $U_{\mathcal{L}}^{\ell}(R; t(\mathcal{A}))\leq U_{\mathcal{L}}^{\ell}(R)$.
Define $p_{A\mid X}^{*}, p_{A}^{*}$ and $p_{X\mid A}^{*}$ as follows: 
\begin{align}
p^{*}_{A\mid X} &:= \argmin_{\substack{p_{A\mid X}\colon \\ \mathcal{L}(X\to A)\leq R}} 
\vE_{X, A}\left[\ell(X, A)\right], \label{eq:def_pax_origin} \\ 
p_{A}^{*}(a) &:= \sum_{x}p_{X}(x)p^{*}_{A\mid X}(a\mid x), \\ 
p^{*}_{X\mid A}(x\mid a)& := \frac{p_{X}(x)p^{*}_{A\mid X}(a\mid x)}{p_{A}^{*}(a)}. 
\end{align}
Let $\tilde{A}$ be a random variable drawn to $p_{A}^{*}$. 
Since $X - \tilde{A} - Y:=t(\tilde{A})$ forms a Markov chain, 
\begin{align}
\mathcal{L}(X\to Y) \leq \mathcal{L}(X\to \tilde{A})\leq R \label{eq:data_processing_ineq_direct_origin}
\end{align}
holds from DPI \eqref{eq:DPI} and \eqref{eq:def_pax_origin}. 
Now, define a privacy mechanism $p^{*}_{Y \mid X}$ as 
\begin{align}
p^{*}_{Y\mid X}(y\mid x) 
&:= \sum_{a} p^{*}_{A\mid X}(a\mid x)\one{y=t(a)}.
\end{align}
Then 
\begin{align}
U_{\mathcal{L}}^{\ell}(R; t(\mathcal{A})) 
&:= \inf_{\substack{p_{Y\mid X}\colon \\ 
\mathcal{L}(X\to Y) \leq R}} 
\vE_{Y}\left[\inf_{a} \vE_{X}\left[\ell(X, a)\mid Y\right]\right] \\ 
&\leq \vE_{Y}\left[\inf_{a} \vE_{X}^{p^{*}_{X \mid Y}}\left[\ell(X, a) \relmiddle{|} Y\right]\right],
\label{eq:ineq_mid_origin}
\end{align}
where the expectation $\vE_{X}^{p^{*}_{Y \mid X}}[\cdot]$ is taken over the distribution $p^{*}_{X \mid Y}(x | y) = p_{X}(x)p^{*}_{Y \mid X}(y | x)/p^{*}_{Y}(y)$.
Now, we will evaluate $\inf_{a} \vE_{X}^{p^{*}_{X\mid Y}}\left[\ell(X, a) \relmiddle{|} Y = t(a^{\prime})\right]$ from above.
\begin{align}
&\inf_{a} \vE_{X}^{p^{*}_{X\mid Y}}\left[\ell(X, a) \relmiddle{|} Y = t(a^{\prime})\right] \notag \\
&\overset{(*)}{=} \inf_{a} \vE_{X}^{p^{*}_{X\mid A}}\left[\ell(X, a) \relmiddle{|} \tilde{A}=a^{\prime}\right] \\ 
&\leq \vE_{X}^{p^{*}_{X\mid A}}\left[\ell(X, a^{\prime}) \relmiddle{|} \tilde{A} = a^{\prime}\right], 
\end{align}
where the equality $(*)$ follows from the sufficiency of $t(\tilde{A})$
\footnote{
It follows immediately from 
$p_{X\mid \tilde{A}}(x| a^{\prime}) = \sum_{y} p_{X, Y\mid \tilde{A}}(x, y | a^{\prime}) = 
\sum_{y}p_{Y\mid \tilde{A}}(y| a^{\prime}) p_{X\mid Y}(x| y) = p_{X\mid Y}(x|y)\one{y=t(a^{\prime})} = p_{X\mid Y}(x\mid t(a^{\prime}))$, where we used the sufficiency of $Y=t(\tilde{A})$ in 
the second equality.
}
. 
Thus we have 
\begin{align}
&\vE_{Y}\left[\inf_{a}\vE_{X}^{p^{*}_{X\mid Y}}\left[\ell(X, a)\right] \mid Y\right] 
= \vE_{\tilde{A}}\left[\inf_{a}\vE_{X}^{p^{*}_{X\mid A}}\left[\ell(X, a)\right] \mid \tilde{A}\right] \\
&\leq \vE^{p^{*}_{X\mid A}}_{X, \tilde{A}} \left[\ell(X, \tilde{A})\right]  
= \inf_{\substack{p_{A\mid X}\colon \\ 
\mathcal{L}(X\to A)\leq R}}\vE_{X, A}\left[\ell(X, A)\right]  \\ 
&= U_{\mathcal{L}}^{\ell}(R).
\end{align}
By combining with \eqref{eq:ineq_mid_origin},
$U_{\mathcal{L}}^{\ell}(R; t(\mathcal{A}))\leq U_{\mathcal{L}}^{\ell}(R)$.
  
\end{proof}

\section{Proof of Proposition \ref{prop:sufficient_statistic}} \label{proof:sufficient_statistic}
The sufficiency of $t(A) = A$ is trivial. 
To prove the sufficiency of $t(A) = (p_{X\mid A}(1 | A), p_{X\mid A}(2 | A), \dots, p_{X\mid A}(m | A))$, 
we first introduce the following lemmas.

\begin{lemma}[\text{\cite[Thm 6.12]{LC}}] 
Assume that a family of distributions $\{p_{A\mid X}(\cdot \mid x)\}_{x\in \mathcal{X}}$ have the same support.
Then 
\begin{align}
s(A) = \left( \frac{p_{A\mid X}(A\mid 2)}{p_{A\mid X}(A\mid 1)}, \dots, \frac{p_{A\mid X}(A\mid m)}{p_{A\mid X}(A\mid 1)} \right)
\end{align}
is a minimal sufficient statistic of $A$ for $X$. 
\end{lemma}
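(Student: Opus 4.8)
The plan is to prove the Lemma along the classical route: the Fisher--Neyman factorization theorem for sufficiency, and the Lehmann--Scheff\'e criterion for minimality, both specialized to the present finite discrete setting where the only subtlety is ensuring the likelihood ratios are well defined. Throughout, I would fix $x = 1$ as a reference parameter value; the common-support hypothesis guarantees that $p_{A\mid X}(a\mid 1) > 0$ whenever $p_{A\mid X}(a\mid x) > 0$ for some $x \in \mathcal{X}$, so every coordinate of $s(a)$ is finite and $s$ is a well-defined map on the common support.

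For sufficiency I would simply observe that, for each $x \in \mathcal{X}$,
\[
p_{A\mid X}(a\mid x) = p_{A\mid X}(a\mid 1)\cdot \frac{p_{A\mid X}(a\mid x)}{p_{A\mid X}(a\mid 1)} = h(a)\,g_x\big(s(a)\big),
\]
where $h(a) := p_{A\mid X}(a\mid 1)$ depends on $a$ alone and $g_x$ is the coordinate projection picking out the entry indexed by $x$ (with $g_1 \equiv 1$). By the factorization theorem this exhibits $s(A)$ as a sufficient statistic of $A$ for $X$.

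For minimality I would invoke the Lehmann--Scheff\'e characterization: $s(A)$ is minimal sufficient iff it is a function of every other sufficient statistic. So let $U$ be any sufficient statistic; factorization gives $p_{A\mid X}(a\mid x) = \tilde h(a)\,\tilde g_x\big(U(a)\big)$, and dividing by the $x=1$ instance yields
\[
\frac{p_{A\mid X}(a\mid x)}{p_{A\mid X}(a\mid 1)} = \frac{\tilde g_x\big(U(a)\big)}{\tilde g_1\big(U(a)\big)},
\]
so each coordinate of $s(a)$, and hence $s(a)$ itself, is determined by $U(a)$. This shows $s(A)$ is minimal sufficient, completing the proof; the same computation (combined with the posterior rescaling $p_{X\mid A}(x\mid a)/p_{X\mid A}(1\mid a) = [p_X(x)/p_X(1)]\cdot p_{A\mid X}(a\mid x)/p_{A\mid X}(a\mid 1)$ and the normalization $\sum_x p_{X\mid A}(x\mid a) = 1$) is what will later let us transfer minimal sufficiency from $s(A)$ to $t(A)$ in Proposition~\ref{prop:sufficient_statistic}.

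The main obstacle is essentially bookkeeping rather than mathematics: one must be scrupulous about where the common-support hypothesis enters, so that all denominators are strictly positive on the relevant support and $s$ (and $U$) are treated as maps on the support of the mixture $p_A$ rather than on all of $\mathcal{A}$, and one must state the factorization theorem and the Lehmann--Scheff\'e criterion in the finite-alphabet form matching the paper's conventions. No idea beyond these two standard theorems is required.
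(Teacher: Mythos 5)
The paper does not prove this lemma at all---it quotes it as \cite[Thm 6.12]{LC}---so there is no in-paper argument to compare against; your proof reconstructs exactly the standard textbook argument behind that citation (Fisher--Neyman factorization with $h(a)=p_{A\mid X}(a\mid 1)$ for sufficiency, and minimality by dividing the factorization of an arbitrary sufficient statistic $U$ at $x$ by the one at $x=1$ to show $s(a)$ is a function of $U(a)$), and it is correct. Your bookkeeping on the common-support hypothesis (all denominators, including $\tilde g_1(U(a))$, strictly positive on the support of $p_A$) is precisely where that assumption enters, so nothing is missing.
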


\begin{lemma} \label{lem:sufficient_function}
Let $T_{1}  = t_{1}(A)$ be a sufficient statistic of $A$ for $X$. 
If there exists a (measurable) function $f$ such that $T_{1} = f(t_{2}(A))$, then $T_{2} = t_{2}(A)$ is also sufficient for $X$.
\end{lemma}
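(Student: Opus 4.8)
The plan is to use the Bayesian characterization of sufficiency already in force in this paper: a statistic $t(A)$ is sufficient for $X$ (relative to the fixed prior $p_X$ and the family $\{p_{A\mid X}(\cdot\mid x)\}_{x\in\mathcal{X}}$) precisely when the posterior factors through it, i.e. $p_{X\mid A}(x\mid a)=p_{X\mid t(A)}\bigl(x\mid t(a)\bigr)$ for every $x\in\mathcal{X}$ and $p_A$-almost every $a$; equivalently $X-t(A)-A$ is a Markov chain. This is exactly the form of sufficiency used in the proof of Theorem~\ref{thm:main_result_for_information_leakage} (see the footnote there), so no new notion is needed.

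First I would translate the hypothesis. Since $T_1=t_1(A)$ is sufficient, $p_{X\mid A}(x\mid a)=p_{X\mid T_1}\bigl(x\mid t_1(a)\bigr)$, and because $t_1(a)=f\bigl(t_2(a)\bigr)$ by assumption, the posterior given $A$ satisfies $p_{X\mid A}(x\mid a)=p_{X\mid T_1}\bigl(x\mid f(t_2(a))\bigr)$. In particular the map $a\mapsto p_{X\mid A}(x\mid a)$ depends on $a$ only through $t_2(a)$; that is, it is $\sigma(T_2)$-measurable.

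Next I would compute $p_{X\mid T_2}$. Since $T_2=t_2(A)$ is itself a function of $A$, the tower property of conditional expectation gives $p_{X\mid T_2}\bigl(x\mid t_2(a)\bigr)=\mathbb{E}\!\left[\,p_{X\mid A}(x\mid A)\relmiddle{|} T_2=t_2(a)\right]$. By the previous step the integrand is already a function of $T_2$, so the conditional expectation returns its own value, namely $p_{X\mid T_1}\bigl(x\mid f(t_2(a))\bigr)=p_{X\mid A}(x\mid a)$. Hence $p_{X\mid T_2}\bigl(x\mid t_2(a)\bigr)=p_{X\mid A}(x\mid a)$ for every $x$ and ($p_A$-a.e.) $a$, which is exactly the statement that $T_2=t_2(A)$ is sufficient for $X$. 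In the finite-alphabet setting this last step is just the elementary identity $p_{X\mid T_2}(x\mid t_2(a))=\sum_{a'\colon t_2(a')=t_2(a)}p_{A\mid T_2}(a'\mid t_2(a))\,p_{X\mid A}(x\mid a')$ together with the fact that $p_{X\mid A}(x\mid a')$ is constant over the fiber $\{t_2(\cdot)=t_2(a)\}$.

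The argument is short and I do not expect a genuine obstacle; the only points needing care are the bookkeeping of the almost-sure qualifiers (all equalities hold off a $p_A$-null set, which is vacuous when $\mathcal{A}$ is finite) and using the functional relationship in the correct direction — $T_1$ is the coarser statistic through which the posterior already factors, and passing to the finer statistic $T_2$, through which $T_1$ factors, cannot destroy sufficiency. Measurability of $f$ enters only to guarantee that $f(t_2(A))$ is a bona fide random variable, so that the conditioning statements make sense.
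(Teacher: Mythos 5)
Your proof is correct, but it takes a different route from the paper. The paper disposes of this lemma by citation: either via the Fisher--Neyman factorization theorem (write $p_{A\mid X}(a\mid x)=g\bigl(t_{1}(a),x\bigr)h(a)$ and substitute $t_{1}=f\circ t_{2}$ to obtain a factorization through $t_{2}$), or via the data processing inequality for Shannon's mutual information (sufficiency of $T_{1}$ gives $I(X;T_{1})=I(X;A)$, the chain $X-A-T_{2}-T_{1}$ squeezes $I(X;T_{2})$ between these, and equality in the DPI yields the Markov chain $X-T_{2}-A$). You instead argue directly with the posterior: sufficiency of $T_{1}$ makes $a\mapsto p_{X\mid A}(x\mid a)$ a (measurable) function of $t_{2}(a)$, and the tower property then forces $p_{X\mid T_{2}}\bigl(x\mid t_{2}(a)\bigr)=p_{X\mid A}(x\mid a)$, i.e.\ $X-T_{2}-A$. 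Your version is self-contained and works with exactly the Bayesian, Markov-chain form of sufficiency that the paper actually uses in the footnote of the proof of Theorem \ref{thm:main_result_for_information_leakage}, so nothing needs to be imported; the paper's version is shorter and flags the statement as a standard fact. The only point worth noting is that your argument establishes sufficiency relative to the fixed prior $p_{X}$ (posterior factors through $t_{2}$), whereas the factorization-theorem route gives classical sufficiency of $T_{2}$ for the family $\{p_{A\mid X}(\cdot\mid x)\}_{x}$; in the finite setting these coincide on the support of $p_{X}$, and the Bayesian form is all that is needed downstream, so this is a matter of framing rather than a gap.
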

\begin{proof}
The statement follows immediately from the Fisher's factorization theorem (see, e.g., \cite[Thm 6.5]{LC}) or DPI for Shannon's mutual information (see e.g., \cite[Eq (2.124)]{Cover:2006:EIT:1146355}). 
\end{proof}

\begin{lemma} 
\begin{align}
e(A) = \left( \frac{p_{X\mid A}(2\mid A)}{p_{X\mid A}(1\mid A)}, \dots, \frac{p_{X\mid A}(m\mid A)}{p_{X\mid A}(1\mid A)} \right)
\end{align}
is a (minimal) sufficient statistic of $A$ for $X$. 
\end{lemma}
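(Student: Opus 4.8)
The plan is to identify $e(A)$ with the minimal sufficient statistic $s(A)$ of \cite[Thm 6.12]{LC} up to a known coordinatewise rescaling, and then to invoke Lemma \ref{lem:sufficient_function}. First I would restrict $\mathcal{X}$ to $\text{supp}(p_{X})$, so that $p_{X}(x)>0$ for every $x$, and observe that the hypothesis that $\{p_{A\mid X}(\cdot\mid x)\}_{x\in\mathcal{X}}$ share a common support makes $p_{A\mid X}(a\mid x)>0$ for all $x$ whenever $p_{A}(a)>0$; consequently both $s(A)$ and $e(A)$ are well defined, since every denominator occurring in them is strictly positive on the support of $p_{A}$.

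Next I would apply Bayes' rule $p_{X\mid A}(x\mid a)=p_{X}(x)\,p_{A\mid X}(a\mid x)/p_{A}(a)$ and compute
\begin{align}
\frac{p_{X\mid A}(i\mid A)}{p_{X\mid A}(1\mid A)}
&= \frac{p_{X}(i)\,p_{A\mid X}(A\mid i)}{p_{X}(1)\,p_{A\mid X}(A\mid 1)} \notag \\
&= c_{i}\cdot \frac{p_{A\mid X}(A\mid i)}{p_{A\mid X}(A\mid 1)}, \qquad i=2,\dots,m,
\end{align}
where $c_{i}:=p_{X}(i)/p_{X}(1)\in(0,\infty)$. This exhibits $e(A)=g(s(A))$, where $g\colon(u_{2},\dots,u_{m})\mapsto(c_{2}u_{2},\dots,c_{m}u_{m})$ is a bijection of $\br^{m-1}$ (the $c_{i}$ being nonzero constants) with measurable inverse $g^{-1}(v_{2},\dots,v_{m})=(v_{2}/c_{2},\dots,v_{m}/c_{m})$, so that also $s(A)=g^{-1}(e(A))$.

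Finally I would conclude as follows. By \cite[Thm 6.12]{LC}, $s(A)$ is minimal sufficient; since $s(A)=g^{-1}(e(A))$ exhibits $s(A)$ as a measurable function of $e(A)$, Lemma \ref{lem:sufficient_function} yields that $e(A)$ is sufficient. For minimality, recall that a minimal sufficient statistic is a function of every sufficient statistic: given any sufficient $T$ there is a measurable $h$ with $s(A)=h(T)$, whence $e(A)=g(s(A))=g(h(T))$ is a measurable function of $T$ as well, so $e(A)$ is minimal sufficient. I expect the only delicate point to be the bookkeeping around the support hypotheses — making sure every ratio defining $s(A)$, $e(A)$, $g$, and $g^{-1}$ is taken of strictly positive quantities — which is exactly what the common-support assumption and the reduction to $\text{supp}(p_{X})$ take care of; the remainder is just Bayes' rule followed by a coordinatewise scaling.
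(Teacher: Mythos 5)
Your proposal is correct and follows essentially the same route as the paper: apply Bayes' rule to see that $e(A)$ and the minimal sufficient $s(A)$ of \cite[Thm 6.12]{LC} differ only by a fixed coordinatewise rescaling with constants $p_{X}(i)/p_{X}(1)$, so $s(A)$ is a function of $e(A)$ and Lemma \ref{lem:sufficient_function} gives sufficiency. The only (inessential) difference is in the minimality step, where you invoke the characterization that a minimal sufficient statistic is a function of every sufficient statistic, while the paper notes directly that $s(a)=s(b)\Longleftrightarrow e(a)=e(b)$, i.e.\ the two statistics induce the same partition of $\mathcal{A}$; both arguments are standard and equivalent here.
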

\begin{proof}
\footnote{This proof is based on \cite[Prop 3.3]{Yatracos_2012}.}
Since $s(A) := \left( \frac{p_{A\mid X}(A\mid 2)}{p_{A\mid X}(A\mid 1)}, \dots, \frac{p_{A\mid X}(A\mid m)}{p_{A\mid X}(A\mid 1)} \right) = \left( \frac{p_{X}(1)}{p_{X}(2)}\cdot \frac{p_{X\mid A}(2\mid A)}{p_{X\mid A}(1\mid A)}, \dots, \frac{p_{X}(1)}{p_{X}(m)}\cdot \frac{p_{X\mid A}(m\mid A)}{p_{X\mid A}(1\mid A)} \right)$ is a function of $e(A)$, it follows from 
Lemma \ref{lem:sufficient_function} that $e(A)$ is also sufficient.
The minimality follows immediately as follows: For arbitrary $a, b\in \mathcal{A}$, it holds that $s(a) = s(b) \Longleftrightarrow e(a) = e(b)$. 
\end{proof}

Making use of these results, we prove Proposition \ref{prop:sufficient_statistic} as follows.
\begin{proof}
Since  $e(A) = \left( \frac{p_{X\mid A}(2\mid A)}{p_{X\mid A}(1\mid A)}, \dots, \frac{p_{X\mid A}(m\mid A)}{p_{X\mid A}(1\mid A)} \right)$ is a function of 
$t(A) = (p_{X\mid A}(1 | A), p_{X\mid A}(2 | A), \dots, p_{X\mid A}(m | A))$, 
from Lemma \ref{lem:sufficient_function}, 
$t(A)$ is also sufficient for $X$. 
\end{proof}

\section{Proof of Proposition \ref{prop:basic_property_gen_VoI}}\label{proof:basic_property_gen_VoI}

\begin{proof}
The property $1)$ is trivial. To prove the property $2)$, 
it suffices to show that $U(R) := \inf_{\substack{p_{A\mid X}\colon \\ 
\mathcal{L}{(X\to A) \leq R}}}\vE_{X, A}\left[\ell(X, A)\right]$ 
is convex (resp. quasi-convex) when $\mathcal{L}(X\to A) = \mathcal{L}(p_{X}, p_{A\mid X})$ is convex (resp. quasi-convex). 
We will only prove the convexity.
For arbitrary $0\leq \lambda \leq 1$ and $0\leq R_{1}, R_{2}\leq K(X)$, 
define 
\begin{align}
p_{A\mid X}^{*, 1} &:= \arginf_{\substack{p_{A\mid X}\colon \\ 
\mathcal{L}(X\to A)\leq R_{1}}}\vE_{X, A}\left[\ell(X, A)\right], \\ 
p_{A\mid X}^{*, 2} &:= \arginf_{\substack{p_{A\mid X}\colon \\ 
\mathcal{L}(X\to A)\leq R_{2}}}\vE_{X, A}\left[\ell(X, A)\right], \\
p_{A\mid X}^{*, \lambda} &:= \lambda p_{A\mid X}^{*, 1} + (1-\lambda) p_{A\mid X}^{*, 2}.
\end{align}
Then let denote 
$\mathcal{L}^{*, 1}(X\to A), \mathcal{L}^{*, 2}(X\to A)$ and 
$\mathcal{L}^{*, \lambda}(X\to A)$ as the $\alpha$-leakages 
defined by  
$p_{A\mid X}^{*, 1}, p_{A\mid X}^{*, 2}$ 
and $p_{A\mid X}^{*, \lambda}$, respectively. Then 
\begin{align}
\mathcal{L}^{*, \lambda}(X\to A) 
&\leq \lambda \mathcal{L}^{*, 1}(X\to A) \notag \\ 
&\qquad + (1-\lambda) \mathcal{L}^{*, 2}(X\to A)\\ 
&\leq \lambda R_{1} + (1-\lambda)R_{2}. \label{ineq:first}
\end{align}
Therefore, 
\begin{align}
U(\lambda R_{1} + (1-\lambda)R_{2})  
&\leq \vE^{p_{A\mid X}^{*, \lambda}}_{X, A}\left[\ell(X, A)\right] \\ 
&= \sum_{x, a}p_{X}(x)p_{A\mid X}^{*, \lambda}(a\mid x) \ell(x, a) \\ 
&= \lambda U(R_{1}) + (1-\lambda) U(R_{2}).  \label{ineq:second}
\end{align}
The quasi-convexity can be proved in a similar way.

To prove the property 3), it suffices to show that 
\begin{align}
\textsf{V}_{\mathcal{L}_{2}}^{\ell} (R; \mathcal{Y}) \leq \textsf{V}_{\mathcal{L}_{1}}^{\ell} (cR; \mathcal{Y})
\end{align}
for arbitrary alphabet $\mathcal{Y}$. 
To this end, define 
\begin{align}
p_{Y\mid X}^{*, 2} &:= 
\argsup_{\substack{p_{Y\mid X}\colon \\ 
{\mathcal{L}_{2}(X\to Y)}\leq R}} \textsf{gain}^{\ell}(X; Y)
\end{align}
for arbitrary alphabet $\mathcal{Y}$. 
Since 
\begin{align}
\mathcal{L}_{1}(p_{X}, p_{Y\mid X}^{*, 2}) \leq c \mathcal{L}_{2}(p_{X}, p_{Y\mid X}^{*, 2}) \leq cR, 
\end{align}
it holds that 
\begin{align}
\textsf{V}_{\mathcal{L}_{1}}^{\ell}(cR; \mathcal{Y}) &:= \sup_{\substack{p_{Y\mid X}\colon \\ \mathcal{L}_{1}(X\to Y)\leq cR}} \textsf{gain}^{\ell}(X; Y) \\ 
&\leq \textsf{gain}^{\ell}(p_{X}, p_{Y\mid X}^{*, 2}) = \textsf{V}_{\mathcal{L}_{2}}^{\ell} (R; \mathcal{Y}),
\end{align}
where $\textsf{gain}^{\ell}(p_{X}, p_{Y\mid X}^{*, 2}) := r(\delta^{*, \text{Bayes}}, p_{Y}^{*, 2}) - r(\delta^{*, \text{Bayes}}, p_{Y\mid X}^{*, 2})$ and $p_{Y}^{*, 2}(y) := \sum_{x}p_{X}(x)p_{Y\mid X}^{*, 2}(y\mid x)$. 
\end{proof}

\clearpage

\end{document}